\definecolor{blueblack}{rgb}{0,0,.7}
\newcounter{sideremark}
\definecolor{Darkblue}{rgb}{0,0,0.4}
\definecolor{Brown}{cmyk}{0,0.61,1.,0.60}
\definecolor{Purple}{cmyk}{0.45,0.86,0,0}
\definecolor{brickred}{rgb}{0.8, 0.25, 0.33}
\theoremstyle{plain}% default
\newtheorem{theorem}{Theorem}
\newtheorem{lemma}{Lemma}
\newtheorem{claim}{Claim}
\newtheorem{definition}{Definition}
\newtheorem{remark}{Remark}
\newtheorem{corollary}{Corollary}
\newcommand{\eps}{\varepsilon}
\newcommand{\calC}{\mathcal{C}}
\newcommand{\hH}{\hat{H}}
\newcommand{\CAP}{\mathrm{CAP}}
\newcommand{\TAP}{\mathrm{TAP}}
\newcommand{\CacAP}{\mathrm{CacAP}}
 \newcommand{\fab}[1]{\textcolor{red}{#1}}
  \newcommand{\ja}[1]{\textcolor{blue}{#1}}
  \def\rem#1{{\marginpar{\raggedright\scriptsize #1}}}
  \newcommand{\fabr}[1]{\rem{\textcolor{red}{$\bullet$ #1}}}
  \newcommand{\fab}[1]{#1}
  \newcommand{\fabr}[1]{}
  \newcommand{\ja}[1]{#1}
\title{Breaching the $2$-Approximation Barrier\\ for Connectivity Augmentation:\\a Reduction to Steiner Tree\thanks{The first author is supported by the NCN grant number 2015/18/E/ST6/00456.
The last 2 authors are partially supported by  the SNSF Excellence Grant 200020B\_182865/1.}}
\author[1]{Jaros\l aw Byrka}
\author[2]{Fabrizio Grandoni}
\author[2]{Afrouz Jabal Ameli}
\affil[1]{University of Wroc\l aw}
\affil[2]{IDSIA, Lugano}
\date{}
\begin{document}
\maketitle
\begin{abstract}
\noindent The basic goal of survivable network design is to build a cheap network that maintains the connectivity between given sets of nodes despite the failure of a few edges/nodes. The \emph{Connectivity Augmentation} Problem ($\CAP$) is arguably one of the most basic problems in this area: given a $k$(-edge)-connected graph $G$ and a set of extra edges (\emph{links}), select a minimum cardinality subset $A$ of links such that adding $A$ to $G$ increases its edge connectivity to $k+1$. Intuitively, one wants to make an \emph{existing} network more reliable by \emph{augmenting} it with extra edges. The best known approximation factor for this NP-hard problem is $2$, and this can be achieved with multiple approaches (the first such result is in [Frederickson and J{\'a}j{\'a}'81]).

It is known [Dinitz et al.'76] that $\CAP$ can be reduced to the case $k=1$, a.k.a. the \emph{Tree Augmentation} Problem ($\TAP$), for odd $k$, and to the case $k=2$, a.k.a. the \emph{Cactus Augmentation} Problem ($\CacAP$), for even $k$. Several better than $2$ approximation algorithms are known for $\TAP$, culminating with a recent $1.458$ approximation [Grandoni et al.'18]. However, for $\CacAP$ the best known approximation is $2$. 

In this paper we breach the $2$ approximation barrier for $\CacAP$, hence for $\CAP$, by presenting a polynomial-time $2\ln(4)-\frac{967}{1120}+\eps<1.91$ approximation. From a technical point of view, our approach deviates quite substantially from the current related literature. In particular, the better-than-2 approximation algorithms for $\TAP$ either exploit greedy-style algorithms  or are based on rounding carefully-designed LPs. These approaches exploit properties of $\TAP$ that do not seem to generalize to $\CacAP$. We instead use a reduction to the Steiner tree problem which was previously used in parameterized algorithms [Basavaraju et al.'14]. This reduction is not approximation preserving, and using the current best approximation factor for Steiner tree [Byrka et al.'13] as a black-box would not be good enough to improve on $2$. To achieve the latter goal, we ``open the box'' and exploit the specific properties of the instances of Steiner tree arising from $\CacAP$. 

In our opinion this connection between approximation algorithms for survivable network design and Steiner-type problems is interesting, and it might lead to other results in the area.  

\end{abstract}

\newpage
\pagenumbering{arabic}

\setcounter{page}{1}

\section{Introduction}

The basic goal of \emph{Survivable Network Design} is to construct cheap networks that provide connectivity guarantees between pre-specified sets of nodes even after the failure of a few edges/nodes (in the following we will focus on the edge failure case). This has many applications, e.g., in transportation and telecommunication networks.
%: for instance in transportation networks it is often important that there are still available paths connecting important destinations despite heavy traffic or car accidents. The same holds for telecommunication networks etc.
% in many contexts it is essential to have a routing system where even in the presence of traffic or car accidents there are still available paths connecting important destinations. 

%A relevant sub-class of these problems is given by \emph{Network Augmentation} problems. Here the goal is to \emph{augment} a given graph $G=(V,E)$ by adding extra edges taken from a given set $L$ (\emph{links}), so as to satisfy given (edge-)connectivity requirements. Several such problems are NP-hard, and in most cases the best known approximation factor is $2$ due to Jain~\cite{Jain01}. 

The \emph{Connectivity Augmentation Problem} ($\CAP$) is among the most basic survivable network design problems. Here we are given a $k$-(edge)-connected\footnote{We recall that $G=(V,E)$ is $k$-connected if for every subset of edges $F\subseteq E$, $|F|\leq k-1$, the graph $G'=(V,E\setminus F)$ is connected.} undirected graph $G=(V,E)$ and a collection $L$ of extra edges (\emph{links}). The goal is to find a minimum \emph{cardinality} subset $OPT\subseteq L$ such that $G'=(V,E\cup OPT)$ is $(k+1)$-connected. Intuitively, we wish to augment an existing network to make it more resilient to edge failures. Dinitz et al.~\cite{DKL76} (see also \cite{CJR99,KT93}) presented an approximation-preserving reduction from this problem to the case $k=1$ for odd $k$, and $k=2$ for even $k$. This motivates a deeper understanding of the latter two special cases. 

The case $k=1$ is also known as the \emph{Tree Augmentation Problem} ($\TAP$). The reason for this name is that any $2$-connected component of the input graph $G$ can be contracted, hence leading to a tree. For this problem several better than $2$ approximation algorithms are known \cite{A17,CG18,CG18a,EFKN09,FGKS18,GKZ18,KN16,KN16b,N03}. In particular, the current best approximation factor is $1.458$ \cite{GKZ18}. 

The case $k=2$ is also known as the \emph{Cactus Augmentation Problem} ($\CacAP$), where for similar reasons we can assume that the input graph is a cactus\footnote{We recall that a \emph{cactus} $G$ is a connected undirected graph in which every edge belongs to exactly one cycle. For technical reasons it is convenient to allow length-$2$ cycles consisting of $2$ parallel edges.}. Here the best-known approximation factor is still $2$, and this factor can be achieved with multiple approaches \cite{FJ81,GGPSTW94,J01,KT93}. A better approximation was achieved very recently for the special case where the input cactus is a cycle \cite{GGJS19}.

Hence $2$ is also the best known approximation factor for $\CAP$ in general. One might also observe that $\TAP$ can be easily reduced to $\CacAP$ by duplicating the edges of the input instance. Hence $\CacAP$ and $\CAP$ are equivalent problems in terms of approximability.

\subsection{Our Results and Techniques}

The main result of this paper is the first better than $2$ approximation algorithm for $\CacAP$, hence for $\CAP$. 
\begin{theorem}\label{thr:main}
For any constant $\eps>0$, there is a polynomial-time $2\ln(4)-\frac{967}{1120}+\eps<1.9092+\eps$ approximation algorithm for the Cactus Augmentation problem.
\end{theorem}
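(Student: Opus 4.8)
The plan is to follow the route advertised above: reduce $\CacAP$ to a structured special case of the Steiner tree problem, observe that feeding this instance into the $(\ln 4+\eps)$-approximation of Byrka et al.\ as a black box only yields something around $2\ln 4$, and then recover a factor strictly below $2$ by ``opening the box'' and exploiting the fact that the Steiner tree instances produced by the reduction are far from arbitrary.

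First, starting from a cactus $G$ with link set $L$, I would build --- essentially as in Basavaraju et al.\ --- a Steiner tree instance $(\hH,R)$ whose terminals $R$ encode the ``leaf'' part of the cactus structure and whose edges split into a forced \emph{structural} part, recording how the cycles of $G$ are glued together, and a \emph{link} part with one edge per element of $L$. The correspondence to establish is that minimum $\CacAP$ solutions of size $\opt$ match minimum Steiner trees of $\hH$ of cost $\opt+\delta$, where $\delta=\delta(G)$ depends only on $G$ and is the unavoidable structural cost, and that from any Steiner tree of cost $c$ one reads back a feasible $\CacAP$ solution with about $c-\delta$ links. The quantitative fact to pin down is a bound of the form $\delta\le c_0\cdot\opt$ on the extremal instances; this is precisely what makes the reduction lossy, and it also controls the final constant.

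Next, note why the black box alone is insufficient and how to fix it. Running Byrka et al.\ on $(\hH,R)$ produces a Steiner tree of cost at most $(\ln 4+\eps)(\opt+\delta)$, i.e.\ a $\CacAP$ solution of size at most $(\ln 4+\eps)\opt+(\ln 4-1+\eps)\delta$, which is only $\approx 2\ln 4$ when $\delta$ is a constant fraction of $\opt$. The remedy is that this approximation is itself an LP-rounding algorithm: it solves the hypergraphic (directed-component) LP over full components of bounded size and then performs iterative randomized rounding, charging each contracted component against a decreasing ``bridge'' potential. For Steiner tree instances arising from $\CacAP$ the full components are very restricted (induced by short cycle arcs and small sub-cacti) and the forced structural edges form a laminar, tree-like backbone. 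Concretely I would (a) split the fractional optimum of the component LP into its structural and its link parts; (b) redo the Byrka et al.\ potential analysis keeping the two parts separate, showing that the expected number of \emph{link} edges ever bought is at most $\alpha\,\opt+\beta\,\delta$ with $\beta$ strictly smaller than the $\ln 4-1$ of the naive bound, since contracting laminar structural components cannot destroy many bridges; and (c) combine this with the complementary regime --- when $\delta$ is large relative to $\opt$ a cruder, LP-free combinatorial bound (or the classical $2$-approximation applied appropriately) is already good enough --- by returning the better of the two outcomes, equivalently by running the rounding with a tuned truncation parameter. Optimizing the resulting estimate over the few free parameters (the full-component size threshold hidden in $\eps$, the cutoff between the two regimes, and the weighting in the potential) is what produces the precise value $2\ln 4-\tfrac{967}{1120}+\eps$.

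The step I expect to be the main obstacle is (b): reopening the iterative randomized rounding analysis and proving that the forced structural part of the fractional solution is rounded with strictly smaller loss than $\ln 4$. A generic Steiner-tree statement does not suffice --- one has to follow how the bridges of the partially built $\CacAP$-Steiner solution evolve under contraction and use that the Steiner points live inside the cactus structure tree, so that structural full components behave almost like single tree edges rather than arbitrary hyperedges. A secondary, bookkeeping-type difficulty is controlling the several small additive losses (bounding full-component size, discretizing the rounding probabilities, the ``$-1$'' terms in ``$\opt+\delta$'') so that they are absorbed into the $+\eps$ and do not erode the $\tfrac{967}{1120}$ gain.
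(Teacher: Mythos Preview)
Your plan misidentifies both the structure of the reduced instance and the mechanism that yields the improvement, so step~(b) as stated would not go through.

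First, the reduction. In the Basavaraju et al.\ construction the Steiner \emph{nodes} (not edges) are the links: one Steiner vertex per $\ell\in L$, one terminal per degree-$2$ cactus node, and edges record crossing of links and link--terminal incidence. There is no separate ``structural'' edge set recording how cycles are glued; the additive offset is simply $\delta=t-1$ where $t$ is the number of terminals, and $|OPT_{ST}|=|OPT|+t-1$. The two crucial structural facts the paper exploits are that (i) every Steiner node is adjacent to at most two terminals, and (ii) terminals can be assumed to be leaves of $OPT_{ST}$. Your proposal never isolates these properties.

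Second, and more importantly, the improvement does \emph{not} come from splitting the fractional LP into two parts and arguing that a ``structural'' part rounds with loss below $\ln 4$; there is no such split and no two-regime case analysis on $\delta/\opt$. The paper runs the Byrka et al.\ algorithm unchanged and instead modifies only the \emph{analysis}: it changes the random marking scheme in the witness-tree argument so that each Steiner node marks a terminal child edge whenever one exists. The point is that unmarked edges each contribute exactly $H_1=1$ to the cost bound, and there are exactly $t-1$ of them; this $t-1$ cancels perfectly against the $-(t-1)$ incurred when translating the Steiner tree back to a $\CacAP$ solution. What remains is $\frac{1}{|OPT|}\sum_{\ell}c(\ell)$ with $c(\ell)=E[H_{w(m(\ell))}]$, and the whole game becomes bounding this \emph{average} over Steiner nodes. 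This is where property~(i) enters: nodes with a terminal child (``good fathers'') make their Steiner children ``good'', and good nodes have deterministic $c(\ell)=H_{d(\ell)}$ rather than the worst-case $\hat H_{d(\ell)}$. A charging argument (good nodes pay a present $p$ to their good father) and a grouping of internal Steiner nodes with leaf-Steiner nodes then reduces the bound to a finite max over four one-variable functions; optimizing $p$ gives $2\ln 4-\tfrac{967}{1120}$. None of the ingredients you list --- LP splitting, laminarity of structural components, a cutoff between regimes --- appear, and without the cancellation observation and the modified marking scheme there is no route from your outline to the stated constant.
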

\begin{corollary}\label{cor:main}
For any constant $\eps>0$, there is a polynomial-time $2\ln(4)-\frac{967}{1120}+\eps<1.9092+\eps$ approximation algorithm for the Connectivity Augmentation problem. 
\end{corollary}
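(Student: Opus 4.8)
The plan is to derive Corollary~\ref{cor:main} from Theorem~\ref{thr:main} by chaining together the approximation-preserving reductions mentioned in the introduction, so that essentially no new algorithmic work is needed. First, recall the reduction of Dinitz, Karzanov and Lomonosov~\cite{DKL76} (see also \cite{CJR99,KT93}): given a $\CAP$ instance consisting of a $k$-connected graph $G=(V,E)$ together with a link set $L$, one constructs in polynomial time an instance which is a $\TAP$ instance when $k$ is odd and a $\CacAP$ instance when $k$ is even. Crucially, this reduction is approximation preserving, meaning that feasible solutions of the two instances are in a cardinality-preserving correspondence (computable in polynomial time in both directions); hence an $\alpha$-approximation for the target problem translates into an $\alpha$-approximation for the original $\CAP$ instance. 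So it suffices to handle $\TAP$ and $\CacAP$ within factor $2\ln(4)-\frac{967}{1120}+\eps$.

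For even $k$ we are already at a $\CacAP$ instance, and Theorem~\ref{thr:main} applies directly. For odd $k$ we are at a $\TAP$ instance, and here I would invoke the further reduction noted in the introduction: duplicate every tree edge, i.e.\ replace it by a length-$2$ cycle consisting of two parallel copies (this is precisely why the cactus definition is taken to allow length-$2$ cycles), while keeping the same link set $L$. The minimal cuts of the resulting cactus are exactly the edge cuts of the original tree, so a subset $A\subseteq L$ makes the cactus $3$-edge-connected if and only if $A$ covers every tree cut, i.e.\ $A$ is a feasible $\TAP$ solution; moreover this correspondence preserves cardinalities and is computable in polynomial time. Applying Theorem~\ref{thr:main} to this $\CacAP$ instance and mapping the solution back through the two reductions yields, in polynomial time, a $\CAP$ solution with the claimed guarantee.

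The only point that genuinely requires care — and the only place I would slow down — is checking that each link in the chain is approximation preserving with respect to the \emph{cardinality} objective on the link set (not merely feasibility-preserving), and that the backward maps on solutions are polynomial time; both facts are standard and already implicitly used in the literature, but I would state them explicitly. Given this, the running time of the composite algorithm is polynomial and its approximation ratio is exactly the ratio in Theorem~\ref{thr:main}, namely $2\ln(4)-\frac{967}{1120}+\eps<1.9092+\eps$, which proves the corollary. I do not expect any substantive obstacle beyond this bookkeeping.
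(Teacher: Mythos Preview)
Your proposal is correct and follows essentially the same approach as the paper: the paper's proof is a one-liner invoking Theorem~\ref{thr:main} together with the reduction to $\CacAP$ implied by \cite{DKL76}, and you have simply unpacked that reduction explicitly (including the $\TAP\to\CacAP$ step via edge duplication for odd $k$, which the paper mentions in the introduction).
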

\begin{proof}
It follows directly from Theorem \ref{thr:main} and the reduction to $\CacAP$ implied by \cite{DKL76}.
\end{proof}

Our result is based on a reduction to the (cardinality) Steiner tree problem by Basavaraju et al. \cite{BFGMRS14}. The authors use this connection to design improved parameterized algorithms (see also \cite{MV15} for a related result).
Recall that in the \emph{Steiner tree} problem we are given an undirected graph $G_{ST}=(T\cup S,E_{ST})$, where $T$ is a set of $t$ \emph{terminals} and $S$ a set of \emph{Steiner nodes}. Our goal is to find a tree (\emph{Steiner tree}) $OPT_{ST}=(T\cup A,F)$ that contains all the terminals (and possibly a subset of Steiner nodes $A$) and has the minimum possible number of edges $|OPT_{ST}|$. Basavaraju et al. observed that, given a $\CacAP$ instance $(G=(V,E),L)$, it is possible to construct (in polynomial time) an \emph{equivalent} Steiner tree instance $G_{ST}=(T\cup L,E_{ST})$. Here $T$ corresponds to the nodes of degree $2$ in $G$, $L$ are the Steiner nodes, and the edges $E_{ST}$ are defined properly (more details in Section \ref{sec:reductionSteiner}). In particular, an optimal solution to $G_{ST}$ induces an optimal solution to $(G,L)$ and vice versa. An example of the reduction is given in Figure \ref{fig:reduction}. Unfortunately, this reduction is not approximation-preserving. In particular, by working out the simple details (see also Section \ref{sec:reductionSteiner}), one obtains that a $\rho_{ST}$-approximation for Steiner tree implies a $\rho\leq 3\rho_{ST}-2$ approximation for $\CacAP$. The current best value of $\rho_{ST}$ is $\ln 4+\eps<1.39$ due to Byrka, Grandoni, Rothvoss and Sanit{\`a} \cite{BGRS13}. Hence this is not good enough to obtain $\rho<2$\footnote{One would need $\rho_{ST}<4/3$ here. Notice that this is not ruled out by the current lower bounds on the approximability of Steiner tree.}. 

In order to obtain our main result we use the same algorithm as in \cite{BGRS13}, but we analyze it differently. In particular, we exploit the specific structure of the instances of Steiner tree arising from $\CacAP$ instances via the above reduction to get a substantially better approximation factor.

In more detail (see also Section \ref{sec:SteinerAlgo}), in the analysis of the algorithm in \cite{BGRS13} one considers an optimal Steiner tree solution $OPT_{ST}=(T\cup A,F)$ rooted at some arbitrary node $r$, marks a random subset $F_{mar}\subseteq F$ of edges so that each Steiner node is connected to some terminal via marked edges, and based on $F_{mar}$ defines a proper (random) \emph{witness set} $W(e)$ for each $e\in F$. The cost of the approximate solution turns out to be at most $(1+\eps)\sum_{e\in F}E[H_{|W(e)|}]$, where $H_i:=1+\frac{1}{2}+\ldots+\frac{1}{i}$ is the $i$-th harmonic number. In particular, the authors show that $E[H_{|W(e)|}]\leq \ln 4$ for each $e\in F$, hence the claimed approximation factor. 

Our analysis of the algorithm deviates from \cite{BGRS13} for the following critical reasons:
\begin{enumerate}\itemsep0pt
\item They (i.e., the authors of \cite{BGRS13}) can assume that each internal node has degree exactly $2$. This can be enforced by exploiting edge weights. We critically need that $OPT_{ST}$ is unweighted, hence we need to deal with arbitrary degrees (which makes the analysis technically more complex).  
\item They mark one child edge of each Steiner node chosen uniformly at random. In our case it is convenient to \emph{favor} child edges with one terminal endpoint (if any). The fact that this helps is not obvious in our opinion.
\item As mentioned above, they provide a per-edge upper bound on $E[H_{|W(e)|}]$. We rather need to average over multiple edges in order to achieve a good bound. Finding a good way to do that is not trivial in our opinion. 
\end{enumerate}    

We remark that, from a technical point of view, our result deviates quite substantially from prior approximation algorithms for $\TAP$. The first improvements on a $2$ approximation where achieved via greedy-style algorithms and a complex case analysis \cite{EFKN09,KN16,KN16b,N03}. More recent approaches are based on rounding stronger and stronger LP (or SDP) relaxations for the problem \cite{A17,CG18,CG18a,FGKS18,GKZ18}. We also use an LP-based rounding algorithm, which is however defined for a generic Steiner tree instance (while the properties of $\TAP$ are used only in the analysis). In our opinion the connection that we established between the approximability of survivable network design problems and Steiner-type problems might lead to other results in the future.

\subsection{Related Work}

One can consider a natural weighted version of $\CAP$ where each link has a positive weight and the goal is to minimize the total weight of selected links. However, in this case the best-know approximation is $2$ even for $\TAP$, and improving on this is a major open problem in the area. The techniques used in this paper seem not to generalize to the weighted case. In particular, one might use a reduction to a node-weighted version of the Steiner tree problem, however the latter problem is harder and in general allows only a logarithmic approximation~\cite{klein1995nearly}. Some progress on weighted $\TAP$ was made in the case of small integer weights. In particular, when the largest weight $W$ is upper bounded by a constant, better than $2$ approximation algorithms are given in \cite{A17,FGKS18,GKZ18}. A technique in \cite{N17} allows one to extend these results to $W=O(\log n)$. Weighted $\TAP$ also admits a $1+\ln 2$ approximation for arbitrary weights if the input tree has constant radius \cite{CN13}. 

A problem closely related to $\CAP$ is to build a minimum size $k$-edge-connected spanning subgraph of a given input graph \cite{CT00,GG12,GR07,JRV03}.

\section{Steiner Tree and Connectivity Augmentation}
\label{sec:Steiner}

In this section we present the mentioned reduction in \cite{BFGMRS14} from $\CacAP$ to Steiner tree (Section \ref{sec:reductionSteiner}). Furthermore, we describe a specific Steiner tree approximation algorithm that we will use to solve the instance arising from the above reduction (Section \ref{sec:SteinerAlgo}). We analyze the resulting approximation factor in Section \ref{sec:apx}.

\subsection{A Reduction to Steiner Tree}
\label{sec:reductionSteiner}

%Basavaraju et al. \cite{BFGMRS14} observed the following connection between Steiner tree and \fab{$\CacAP$} that is at the heart of our construction. 

Consider a $\CacAP$ instance $(G=(V,E),L)$. For a link $\ell=(v_0,v_{q+1})$, let $v_1,\ldots,v_q$ be the sequence of nodes of degree at least $4$ other than $v_0$ and $v_{q+1}$ that lie along every simple $v_0$-$v_{q+1}$ path. Notice that each pair $\ell_i=\{v_i,v_{i+1}\}$ \
lies along a distinct cycle $C_i$ visited by the mentioned path. We call each such $\ell_{i}$ the \emph{projection} of $\ell$ on $C_i$. Consider two links $\ell=\{x,y\}$ and $\ell'=\{x',y'\}$ that have endpoints in the same cycle $C$. Then we say that $\ell$ and $\ell'$ \emph{cross} if one of the following two conditions hold: (1) they share one endpoint or (2) taking one simple $x$-$y$ path $P$ along $C$, $P$ contains exactly one node in $\{x',y'\}$ as an internal node. We say that any two links $\ell$ and $\ell'$ \emph{cross} if there exists a projection $\ell_i$ of $\ell$ and a projection $\ell'_j$ of $\ell'$ such that $\ell_i$ and $\ell'_j$ cross. See Figure \ref{fig:reduction} (left) for an example.

From $(G,L)$ we construct a Steiner tree instance $G_{ST}=(T\cup S,E_{ST})$ as follows. For each one of the $t$ nodes $v$ of degree 2 in $G$, add a terminal $v$ to $T$; for each link $\ell\in L$, add a Steiner node $\ell$ to $S$ (i.e., $S=L$); for each $\ell\in L$ and endpoint $v\in T$ of $\ell$, add $\{\ell,v\}$ to $E_{ST}$; finally, for any two links $\ell$ and $\ell'$ that \emph{cross}, add $\{\ell,\ell'\}$ to $E_{ST}$. See Figure \ref{fig:reduction} (right) for an example. We observe the following simple facts.
\begin{remark}\label{rem:adjacentTerminals}
Each Steiner node is adjacent to at most $2$ terminals.
\end{remark}
\begin{remark}\label{rem:clique}
The neighbors of each terminal are Steiner nodes and form a clique.
\end{remark}
We will critically exploit the following lemma sketched in \cite{BFGMRS14} (Lemma 1). For the sake of completeness we give a (more detailed) proof of it in Appendix \ref{sec:lemma1proof}.
\begin{lemma}\cite{BFGMRS14}\label{lem:SteinerReduction}
$A\subseteq L$ is a feasible solution to a $\CacAP$ instance $(G,L)$ iff, in the corresponding Steiner tree instance $G_{ST}=(T\cup L,E_{ST})$,  $G_{ST}[T\cup A]$ is connected. 
\end{lemma}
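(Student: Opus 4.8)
My plan is to reduce the statement to the standard cut characterization of feasibility and then to a purely combinatorial fact about a single cycle. Since the cactus $G$ is $2$-edge-connected, $G+A$ is $3$-edge-connected iff every $2$-edge-cut of $G$ is \emph{covered} by $A$ (some link of $A$ has exactly one endpoint on each side), and in a cactus such a cut is exactly a pair $\{e,e'\}$ of edges of a common cycle $C$: it splits $C$ into two arcs $P_1,P_2$ and $V$ into sides $V_1\supseteq V(P_1)$ and $V_2\supseteq V(P_2)$, each subcactus hanging off $C$ going with its attachment vertex. The first ingredient I would prove is a \emph{translation lemma}: because in a cactus a simple path traverses each cycle along a single contiguous arc and never returns to it, a link with both endpoints in $V_i$ has all its marker vertices $v_1,\dots,v_q$ in $V_i$ and all its projections either on cycles contained in $V_i$ or on $C$ with both projection-endpoints inside the arc $P_i$; hence only links that project on $C$ can cover a cut of $C$, and such a link covers $\{e,e'\}$ iff its $C$-projection has its two endpoints in different arcs. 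The second, trivial ingredient is that the cut isolating a single degree-$2$ vertex $v$ is covered iff some link of $A$ has $v$ as an endpoint, i.e.\ iff $v$ is non-isolated in $G_{ST}[T\cup A]$; with this, it is enough to show that $A$ is feasible iff $G_{ST}[A]$ is connected and every terminal is non-isolated.

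For ``$G_{ST}[T\cup A]$ connected $\Rightarrow$ $A$ feasible'' I would argue by contraposition: suppose a cut $\{e,e'\}$ of a cycle $C$, with sides $V_1,V_2$, is not covered. Place each terminal into the side containing it and each link of $A$ into the side containing \emph{both} of its endpoints (possible since the cut is uncovered), getting a partition $T\cup A=(T_1\cup A_1)\sqcup(T_2\cup A_2)$ with both parts nonempty (each arc carries a degree-$2$ vertex, directly or inside a hanging subcactus). I claim no $G_{ST}$-edge crosses this partition. A terminal--link edge stays on the link's side by construction. For a link--link edge take $\ell\in A_1$ and $\ell'\in A_2$: by the translation lemma the only cycle on which both can project is $C$, and there the projection of $\ell$ lies inside $P_1$ and that of $\ell'$ inside $P_2$ — two chords lying in disjoint arcs of a cycle cannot cross, because a simple path between the endpoints of one either avoids $P_2$ entirely or sweeps all of it, never exactly one endpoint of the other; and $\ell,\ell'$ share no endpoint and no marker, since all of these lie on a single side. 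Hence $G_{ST}[T\cup A]$ is disconnected.

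For ``$A$ feasible $\Rightarrow$ $G_{ST}[T\cup A]$ connected'' the core is a single-cycle statement: \emph{if a family $H$ of chords of a cycle $C$ covers every $2$-edge-cut of $C$, then the crossing graph of $H$ (chords adjacent when they share a vertex or interleave) is connected.} Granting it, for each cycle $C$ let $A_C$ be the set of links projecting on $C$; by the translation lemma $A_C$ is precisely the set of links that can cover the cuts of $C$, so by feasibility it covers all of them, and the single-cycle statement makes the links of $A_C$ span a connected subgraph of $G_{ST}[A]$. It remains to stitch cycles together along cut vertices: if cycles $C$ and $C'$ share a cut vertex $w$, the cut of $C$ isolating $w$ together with the subcactus containing $C'$ is covered, forcing a link of $A_C$ whose $C$-projection has $w$ as an endpoint; such a link either has $w$ as an actual endpoint, and then shares $w$ (hence is $G_{ST}$-adjacent) with the analogous link forced on the $C'$-side, or its path runs through $w$ into a neighbouring cycle, on which it then also projects — in both cases $A_C$ and $A_{C'}$ land in the same component of $G_{ST}[A]$. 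As the block structure of the cactus is connected and (by feasibility) every terminal is non-isolated, $G_{ST}[T\cup A]$ is connected.

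The step I expect to be the real obstacle is the single-cycle statement. In contrapositive form it asks: given a partition $H=H_1\sqcup H_2$ of a chord family with no crossing and no shared vertex between the parts, exhibit a $2$-edge-cut of $C$ covered by neither. The obvious candidates — the cut around the arc spanned by a single chord, or around the union of all endpoints appearing in $H_1$ — can fail, since a chord of $H_2$ may reach into a ``gap'' of $H_1$ without interleaving any $H_1$-chord; so the cut has to be chosen carefully, e.g.\ around an innermost chord after first taming the internal crossings within each part. A secondary, routine obstacle is the bookkeeping: vertices lying exactly on the cut edges, links with a cut vertex as an endpoint versus links merely passing through it, length-$2$ cycles, and the degenerate case $A=\emptyset$ (where feasibility forces $T=\emptyset$ and $G_{ST}$ is empty) — all manageable once the block/cut structure of the cactus is exploited systematically.
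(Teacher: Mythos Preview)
Your plan is the paper's own: contraposition via an uncovered cut for one direction, reduction to a single cycle plus stitching at cut vertices for the other, with the single-cycle crossing-graph statement as the crux. Two remarks.

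First, the paper avoids your translation lemma by replacing each link with its set of projections at the outset: consecutive projections share an endpoint and hence are $G_{ST}$-adjacent, so this preserves the connected components of $G_{ST}[T\cup A]$, and it clearly preserves which $2$-cuts are covered. After this step every link of $A$ lies on a single cycle, the sets $A_i$ are simply the links on $C_i$, and the stitching becomes the one-liner ``the cut of $C_i$ at the shared vertex $w$ forces some $\ell_i\in A_i$ with endpoint $w$; $\ell_i$ and the analogous $\ell_j\in A_j$ share $w$ and hence cross.''

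Second, the obstacle you flag --- the single-cycle statement --- has a clean direct argument that needs no taming of internal crossings. Take a bipartition $A=L_R\sqcup L_B$ with no cross-colour crossing; since sharing an endpoint counts as crossing, the endpoint sets $V_R,V_B$ are disjoint. Cut the cycle into maximal monochromatic arcs $V_1,\dots,V_{2k}$. Call a link \emph{nice} if its endpoints lie in distinct blocks. If no link is nice, any two edges between consecutive blocks form an uncovered cut. Otherwise pick a nice link $\ell=\{u,v\}$ whose endpoints are at \emph{minimum} distance along the cycle; say it is red, with the blue block $V_2$ lying on the short $u$--$v$ arc, and let $e_1,e_2$ be the two edges bounding $V_2$. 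Any link covering $\{e_1,e_2\}$ has one endpoint in $V_2$, hence is blue and nice; its other endpoint is blue, so distinct from $u,v$. If that endpoint lies outside the short arc the link interleaves $\ell$ (impossible across colours); if it lies inside, the link is strictly shorter than $\ell$ (impossible by minimality). So $\{e_1,e_2\}$ is uncovered. Your ``innermost chord'' intuition is exactly this minimum-distance trick, and the block decomposition is what makes the candidate cut work where the na\"ive ones you discarded do not.
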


%It is \fab{implicit} in \cite{???}, and we give a full proof for the sake of simplicity and self-containment.\fabr{I'm citing \cite{???} but I didn't find a corresponding claim so far. In any case, we might want to add our own proof} 
%\begin{lemma}\cite{???}\label{lem:SteinerReduction}
%$L'\subseteq L$ is a feasible solution to a $\CacAP$ instance $G,L$ iff, in the corresponding Steiner tree instance $G_{ST}=(T\cup S,E_{ST})$,  $G_{ST}[T\cup L']$ contains a Steiner tree.\fabr{I tried to rewrite this proof in a shorter way, but this needs a careful check} 
%\end{lemma}

Notice that the above reduction is not approximation-preserving. Still, we can state the following. 
\begin{corollary}\label{cor:SteinerReduction}
The optimum solution $OPT$ to the input $\CacAP$ instance, induces a solution $OPT_{ST}$ of cost $|OPT_{ST}|=|OPT|+t-1$ for the associated Steiner tree instance.
Vice versa, given a solution $APX_{ST}$ to the Steiner tree instance, one can construct in polynomial time a solution $APX$ to the input $\CacAP$ instance with $|APX|=|APX_{ST}|-t+1$.
\end{corollary}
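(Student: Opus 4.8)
The plan is to derive the corollary almost immediately from Lemma \ref{lem:SteinerReduction}, the only extra ingredient being the elementary fact that a tree on $n$ vertices has exactly $n-1$ edges; there is no real obstacle beyond careful vertex/edge bookkeeping, so I will keep the argument short.

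For the first part, I would start from an optimal $\CacAP$ solution $OPT\subseteq L$. By Lemma \ref{lem:SteinerReduction}, $G_{ST}[T\cup OPT]$ is connected, and this graph has $|T\cup OPT| = t+|OPT|$ vertices, so any of its spanning trees is a feasible Steiner tree with exactly $t+|OPT|-1$ edges; this both exhibits the claimed induced solution $OPT_{ST}$ and gives $|OPT_{ST}|\le |OPT|+t-1$. For the matching lower bound I would take an optimal Steiner tree $OPT_{ST}=(T\cup A,F)$ with $A\subseteq S=L$; since it is a tree on $t+|A|$ vertices, $|OPT_{ST}|=|F|=t+|A|-1$, and since $OPT_{ST}$ is connected and spans $T\cup A$, the induced subgraph $G_{ST}[T\cup A]$ is connected, so Lemma \ref{lem:SteinerReduction} makes $A$ a feasible $\CacAP$ solution, whence $|OPT|\le |A| = |OPT_{ST}|-t+1$. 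Combining the two inequalities yields $|OPT_{ST}| = |OPT|+t-1$ exactly.

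For the second part, given a feasible Steiner tree $APX_{ST}$ (a tree spanning all terminals together with some set $APX\subseteq L$ of Steiner nodes — if one is handed merely a connected subgraph, first pass to a spanning tree, which does not increase the edge count), I would simply read off $APX$, the set of Steiner nodes used. Again $APX_{ST}$ is a tree on $t+|APX|$ vertices, so $|APX| = |APX_{ST}|-t+1$, and since $APX_{ST}$ is connected and spans $T\cup APX$, the graph $G_{ST}[T\cup APX]$ is connected, so Lemma \ref{lem:SteinerReduction} guarantees that $APX$ is a feasible solution to $(G,L)$. Extracting $APX$ from $APX_{ST}$ is clearly polynomial (indeed linear) time.

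The "hard part" is essentially vacuous: all the substance lives in Lemma \ref{lem:SteinerReduction}. The only point requiring a little care is to make sure that the first statement comes out as an \emph{exact} equality rather than just an inequality up to an additive $\pm 1$, which is why one needs both the spanning-tree construction (upper bound) and the feasibility of the Steiner set of an optimal Steiner tree (lower bound).
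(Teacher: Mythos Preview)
Your proof is correct and follows essentially the same route as the paper: both parts are immediate from Lemma~\ref{lem:SteinerReduction} together with the edge count of a tree on $t+|A|$ vertices. The paper's own proof is even shorter because it reads the first claim as ``$OPT$ induces \emph{some} Steiner tree of cost exactly $|OPT|+t-1$'' (namely any spanning tree of $G_{ST}[T\cup OPT]$), so no separate lower bound is needed; you additionally show that this value equals the optimal Steiner cost, which is a harmless and true strengthening but not required by the statement.
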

\begin{proof}
Both claims follow directly from Lemma \ref{lem:SteinerReduction}. For the first claim, it is sufficient to observe that a spanning tree of $G_{ST}[T\cup OPT]$ contains $t+|OPT|-1$ edges. For the second claim, observe that the Steiner nodes in $APX_{ST}$ induce a feasible solution to $\CacAP$. The claim follows since $|APX_{ST}|=s+t-1$, where $s$ is the number of Steiner nodes in $APX_{ST}$.
\end{proof}
We will exploit also the following simple fact.
\begin{lemma}\label{lem:deg1Terminals}
There is a feasible solution $OPT_{ST}$ to the Steiner tree instance with $|OPT_{ST}|=|OPT|+t-1$ where terminals have degree exactly $1$.
\end{lemma}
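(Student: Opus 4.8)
The plan is to start from the optimal Steiner tree provided by Corollary~\ref{cor:SteinerReduction} and to repeatedly ``push down'' the terminals of degree at least $2$ until every terminal becomes a leaf, all the while keeping the number of edges equal to $|OPT|+t-1$. Concretely, I would use an extremal argument: among all feasible Steiner tree solutions of cost $|OPT|+t-1$ (which, by Corollary~\ref{cor:SteinerReduction}, is in fact the optimal cost, since any feasible Steiner solution of cost $c$ yields a $\CacAP$ solution of cost $c-t+1\geq |OPT|$), pick one, say $\calT$, that minimizes the number of terminals having degree at least $2$ in $\calT$. The goal is then to show this number is $0$; note every terminal has degree at least $1$, since $\calT$ is a nontrivial connected tree spanning $T$, so this immediately gives the claimed $OPT_{ST}:=\calT$.

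Suppose for contradiction that some terminal $v$ has degree $d\geq 2$ in $\calT$, with neighbours $\ell_1,\dots,\ell_d$. By Remark~\ref{rem:clique}, these neighbours are all Steiner nodes and they form a clique in $G_{ST}$. Deleting $v$ from $\calT$ splits the tree into $d$ subtrees $\calT_1,\dots,\calT_d$ with $\ell_i\in\calT_i$. Now form $\calT'$ from $\calT$ by keeping the edge $\{v,\ell_1\}$, removing the edges $\{v,\ell_i\}$ for $i=2,\dots,d$, and adding instead the edges $\{\ell_1,\ell_i\}$ for $i=2,\dots,d$. Each added edge $\{\ell_1,\ell_i\}$ lies in $E_{ST}$ by the clique property, and it is not already present in $\calT$ because $\ell_1$ and $\ell_i$ lie in different components of $\calT-v$. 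Hence $\calT'$ has the same vertex set and the same number of edges as $\calT$, and it is connected (the subtrees $\calT_2,\dots,\calT_d$ now hang off $\ell_1$, which is joined to $v$ via $\{v,\ell_1\}$), so $\calT'$ is again a spanning tree of $G_{ST}[T\cup OPT]$; in particular it uses exactly the Steiner nodes of $\calT$, so it is feasible by Lemma~\ref{lem:SteinerReduction} and has cost $|OPT|+t-1$.

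It then remains to observe that this exchange strictly decreases the quantity being minimized: in $\calT'$ the terminal $v$ has degree $1$, and since the only new edges join two Steiner nodes, no other terminal's degree changes. This contradicts the choice of $\calT$, so $\calT$ has no terminal of degree $\geq 2$, completing the proof. I do not expect a genuine obstacle here: the only slightly delicate points are that $\calT'$ stays acyclic and that the rewiring creates no parallel edge, and both follow from the single observation that $\ell_1,\dots,\ell_d$ lie in pairwise distinct components of $\calT-v$; the rest is bookkeeping (and, alternatively, one can phrase the argument iteratively, performing one such exchange per high-degree terminal, which clearly terminates since the count drops each time).
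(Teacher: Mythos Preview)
Your argument is correct and essentially the same as the paper's: both use Remark~\ref{rem:clique} to reroute edges at a high-degree terminal through one of its Steiner neighbours, thereby lowering that terminal's degree without changing the cost. The only cosmetic difference is that the paper reduces the degree by one per step and iterates, whereas you drop it to $1$ in a single exchange and phrase the termination via an extremal choice.
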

\begin{proof}
Given any feasible solution $ST$ to the problem, we can transform it into a solution $ST'$ of the same cost where some terminal $v$ of degree $d(v)\geq 2$ in $ST$ has degree $d(v)-1$ in $ST'$. In order to do that, consider any terminal $v$ adjacent to two Steiner nodes $\ell$ and $\ell'$ in $ST$. By Remark \ref{rem:clique}, $\ell$ and $\ell'$ are adjacent. Hence $ST':=ST\cup \{\ell,\ell'\}\setminus \{v,\ell'\}$ is a feasible Steiner tree of the same cost and with the desired property.   

By iteratively applying the above process to the solution $OPT_{ST}$ guaranteed by Corollary \ref{cor:SteinerReduction} one obtains the desired solution. 
\end{proof}
As mentioned earlier, a $\rho_{ST}$ approximation for Steiner tree (used as a \emph{black box}) provides a $3\rho_{ST}-2$ approximation for $\CacAP$ by the above construction. Indeed, the Steiner tree instance has cost at most $|OPT|+t-1$ by Corollary \ref{cor:SteinerReduction}, hence an approximate solution $APX_{ST}$ would cost at most $\rho_{ST}(|OPT|+t-1)$. By the same corollary, we can convert this into a solution $APX$ to $\CacAP$ of cost at most $\rho_{ST}(|OPT|+t-1)-t+1$. Next observe that $|OPT|\geq t/2$. Indeed, any node of degree $2$ in the $\CacAP$ instance needs to have at least one link incident to it in a feasible solution, and a link can be incident to at most $2$ such nodes. Thus $|APX|\leq 3\rho_{ST}|OPT|-2|OPT|$. In order to improve on this simple bound, we will have to \emph{open the box}.

\subsection{Steiner Tree via Iterative Randomized Rounding}
\label{sec:SteinerAlgo}

As we mentioned in the introduction, the current best $(\ln 4+\eps)$-approximate Steiner tree algorithm from \cite{BGRS13}, used as a black box, is not good enough to break the $2$-approximation barrier for $\CacAP$. However, it turns out that the same algorithm achieves this goal in combination with a different analysis that exploits the properties of the specific Steiner tree instances arising from $\CacAP$. 

We next sketch the basic properties of the algorithm and analysis in \cite{BGRS13} that we need here. A more detailed description is given in Section \ref{sec:backgroundSteiner} in the Appendix for the sake of completeness.
The authors of \cite{BGRS13} consider an LP relaxation $DCR_k$ for the problem based on \emph{directed $k$-components} for a proper constant parameter $k$ depending on $\eps$. They iteratively solve this LP, sample a directed $k$-component $C$ with probability proportional to the LP values, and contract $C$. The process ends when all terminals are contracted into one node. This algorithm can be derandomized, and the deterministic version is good enough for our application. We do not need more details about this algorithm, other than that it runs in polynomial time.  

In the analysis the authors of \cite{BGRS13} consider any feasible Steiner tree $ST=(T\cup A,F)$, which is seen as rooted at some arbitrary node $r$. 
%In their case $S$ is binary and edge weighted, but this is not necessary (and it will not hold in our case). 
Then the authors define a \emph{marking scheme} where some child edge of each internal (Steiner) node is marked. 
%Let $F_{mar}$ and $F_{unm}$ be the set of marked and unmarked edges. 
%Notice that this creates a path of marked edges from each internal node to some (terminal) leaf. 
A given marking scheme defines a \emph{witness set} $W(e)$ for each edge $e$: this consists of pairs of terminals $\{t',t''\}$ such that the (simple) $t'$-$t''$ path in $ST$ contains $e$ and precisely one unmarked edge.
We let $w(e)=|W(e)|$. Notice that $w(e)=1$ for an unmarked edge. Then the authors prove the following, where $H_i:=1+\frac{1}{2}+\ldots+\frac{1}{i}$ is the $i$-th harmonic number.   
\begin{lemma}\label{lem:SteinerApx}\cite{BGRS13}
For any feasible Steiner tree $ST=(T\cup A,F)$ and marking scheme, for a large enough parameter $k=O_\eps(1)$, the cost of the solution computed by the above algorithm is at most 
$
(1+\eps)\sum_{e\in F}E[H_{w(e)}].
$
\end{lemma}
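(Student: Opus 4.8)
This lemma is quoted verbatim from~\cite{BGRS13}, so the plan is essentially to reconstruct the analysis of iterative randomized rounding from that paper at the level of detail needed here. The plan is to fix a feasible Steiner tree $ST=(T\cup A,F)$ rooted at $r$, together with the given marking scheme, and to track how the LP solution to $DCR_k$ ``deteriorates'' along the iterations relative to this fixed tree. First I would recall that the LP $DCR_k$ is solved, a directed $k$-component is sampled with probability proportional to its LP value, the sampled component is contracted, and the process repeats; since at each step we pay for the sampled component, the total cost is the sum over iterations of the cost of the contracted component. The core quantity to control is, for the current (partially contracted) instance, a feasible fractional solution to $DCR_k$ built from the witness structure of $ST$; bounding the expected cost of the sampled component by the cost of this witness-based fractional solution, and then bounding how this cost decreases in expectation across iterations, is what yields the harmonic-number bound.

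The key steps, in order: (i) For a parameter $k=O_\eps(1)$, recall the standard fact (from~\cite{BGRS13}, via the $k$-restricted Steiner tree structure of Borchers--Du) that restricting to $k$-components loses only a $(1+\eps)$ factor, which is the source of the $(1+\eps)$ in the statement. (ii) Given the current contracted terminal set, use the marked edges of $ST$ to decompose $ST$ into ``full components'' whose leaves are terminals and whose internal nodes are Steiner nodes reached via marked edges; further chop these into pieces of size at most $k$, obtaining a fractional feasible solution $x$ to $DCR_k$ supported on these directed $k$-components, with total LP cost equal to (a $(1+\eps)$-blowup of) the number of edges of $ST$ not yet contracted. (iii) Argue that sampling a component from the \emph{optimal} LP solution is at least as good, in expectation, as sampling from this witness-based solution $x$; this is the ``LP is at least as strong as any explicit feasible solution'' step. (iv) Set up the potential/bridge-counting argument: charge each edge $e\in F$ a value $H_{w(e)}$, and show that after contracting the sampled component, the expected drop in $\sum_{e \text{ not contracted}} (\text{appropriate fractional charge})$ accounts for the cost just paid. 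Iterating (iv) over all rounds telescopes to give total expected cost $\le (1+\eps)\sum_{e\in F} H_{w(e)}$, and then linearity of expectation over any randomness in the marking scheme gives the $E[H_{w(e)}]$ form.

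The main obstacle — and the technically heaviest part — is step (iv): the bridge-lemma / witness-set bookkeeping showing that in one iteration the expected decrease of the witness-based LP value, summed appropriately, dominates the cost of the contracted component, so that the harmonic numbers $H_{w(e)}$ emerge rather than something weaker. This is precisely where~\cite{BGRS13} does its work: one must show that when a $k$-component is contracted, each edge $e$ on a witness path either gets contracted or has its witness set shrink in a controlled way, and that summing $\tfrac1{w(e)}$-type increments over the process reconstitutes $\sum_e H_{w(e)}$. In the write-up I would state this as a clean intermediate lemma (the ``deterioration'' or ``bridge'' lemma) and either cite~\cite{BGRS13} for its proof or reproduce it in an appendix, since in this paper the statement of Lemma~\ref{lem:SteinerApx} is used purely as a black box and only its conclusion — cost $\le (1+\eps)\sum_{e\in F} E[H_{w(e)}]$ for an \emph{arbitrary} marking scheme — matters. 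A secondary but genuine point to be careful about is that the lemma must hold for \emph{any} marking scheme, including ones that do not mark a uniformly random child edge (as the present paper will later favor child edges with a terminal endpoint); so the reduction to $DCR_k$ feasibility must not use anything about how edges are marked beyond the defining property that every Steiner node has a marked child edge, ensuring every Steiner node is connected to a terminal through marked edges.
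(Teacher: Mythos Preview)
Your plan is broadly on target, and you correctly flag the one point that matters downstream: the bound must hold for an \emph{arbitrary} marking scheme, not only the uniform one. However, the mechanism you sketch deviates from the paper's own presentation (Appendix~\ref{sec:backgroundSteiner}) and blurs the step where the harmonic numbers actually enter.

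The paper does not build a fractional $DCR_k$ solution from the witness structure and compare sampling distributions (your steps (ii)--(iii)), nor does it run a bridge-lemma potential argument. Instead it maintains a sequence of \emph{reference} Steiner trees $S^1=ST,\,S^2,\ldots$, each feasible for the current contracted instance, so the expected cost of the component sampled at step $t$ is at most $(1+\eps)\,c(S^t)/M$ where $M=\sum_C x^t_C$. Setting $D(e)=\max\{t:e\in S^t\}$, the total expected cost becomes $(1+\eps)M^{-1}\sum_{e\in F}E[D(e)]$. The harmonic numbers enter via a coupon-collector step you do not isolate: Lemma~19 of \cite{BGRS13} gives a random process for deleting edges of the auxiliary terminal-spanning witness tree $W$ (a \emph{second}, unrelated notion of ``marking'') such that every surviving edge of $W$ is deleted with probability at least $1/M$ each iteration. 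Since an edge $e\in F$ may be dropped from the reference tree once all $w(e)$ edges of its witness set $W(e)\subseteq W$ are deleted, $E[D(e)]\le H_{w(e)}\cdot M$ follows immediately (Lemma~20 in \cite{BGRS13}), and the claim drops out. Your step~(iv) conflates the two layers --- edges of $ST$ versus edges of $W$ --- when you speak of ``witness sets shrinking'' and edges being ``contracted'', and it never surfaces the $1/M$-per-iteration deletion lemma, which is where the real content lies. The ``bridge lemma'' you invoke is a separate tool in \cite{BGRS13} used for an alternative analysis; the route the paper follows here is the witness-tree/survival-time one.
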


%Our analysis deviates from the one in \cite{BGRS13} for $3$ main reasons (more details in Section \ref{sec:}).\fabr{Maybe we can remove this par. which is redundant} First of all, the authors of \cite{BGRS13} can assume w.l.o.g. that $ST$ is binary and that internal nodes have precisely $2$ children: they can enforce this property by exploiting edge weights. For us it is critical that edges have unit weights, therefore we will need to handle arbitrary degrees. Second, they mark one of the two child edges of an internal node chosen uniformly at random. We will use a different marking scheme which \emph{favors} edges connecting internal nodes to leaves. Third, they can prove that $E[H_{w(e)}]\leq \ln 4$ for each edge $e$ taken individually. We will need to upper bound the average value of the above quantity. In particular, for some edges the quantity $E[H_{w(e)}]$ might be even larger \fab{than $2$}.\fabr{Or maybe unbounded?}

\section{An Improved $\CacAP$ Approximation Algorithm}
\label{sec:apx}

In this section we present our improved approximation for $\CacAP$. The algorithm is rather simple: we just build the Steiner tree instance $G_{ST}=(T\cup L,E_{ST})$ associated with the input $\CacAP$ instance $(G,L)$ and compute an approximate solution $APX_{ST}$ to $G$ via the algorithm in \cite{BGRS13} sketched in Section \ref{sec:SteinerAlgo}. Then we derive from $APX_{ST}$ a feasible solution $APX$ to the input $\CacAP$ instance as described in Corollary \ref{cor:SteinerReduction}. We let $apx$ denote the approximation ratio of this algorithm.

In Section \ref{sec:marking} we describe our alternative marking scheme and prove some of its properties. In Section \ref{sec:analysis} we complete the analysis of the approximation factor.

\subsection{An Alternative Marking Scheme}
\label{sec:marking}

Recall that in the analysis of the Steiner tree approximation algorithm in \cite{BGRS13}, one can focus on a specific feasible Steiner tree $ST$ and on a specific marking scheme (so that Steiner nodes are connected to some terminal via paths of marked edges). As feasible solution $ST$ we consider the solution $OPT_{ST}=(T\cup OPT,F)$, of cost $|OPT|+t-1$ and with terminals being leaves, guaranteed by Lemma \ref{lem:deg1Terminals}. 
%\fab{Notice that this is not necessarily the optimal Steiner tree of $G_{ST}$.}\fabr{Or maybe it is? We can remove this sentence} 

We mark edges in the following way. Let us root $OPT_{ST}$ at some Steiner node $r$ which is adjacent to at least one terminal. For a Steiner node $\ell$, we let $d(\ell)$, $s(\ell)$ and $t(\ell)$ be the number of its children, Steiner children, and terminal children, resp. In particular $d(\ell)=s(\ell)+t(\ell)$ and (by Remark \ref{rem:adjacentTerminals}) $t(\ell)\leq 2$.

For each link node $\ell$, there are two options. If $\ell$ has at least one terminal child, we select one such child $t$ uniformly at random, and mark edge $\{\ell,t\}$. Otherwise, we choose a child $\ell'$ of $\ell$ ($\ell'$ being a Steiner node) uniformly at random, and mark edge $\{\ell,\ell'\}$. Notice that this is obviously a feasible marking scheme. Observe also that in \fab{our} marking we \emph{favor} edges connecting Steiner nodes to terminals: this will be critical in our analysis. See Figure \ref{fig:optimalSteiner} for a possible marking of this type.

Let $APX_{ST}$ be the Steiner tree computed by the algorithm.
Let $F_{mar}$ and $F_{unm}$ be the (random) sets of marked and unmarked edges, resp., that partition $F$. Recall that for each $e\in F$, there exists a (random) witness set $W(e)$ of size $w(e)=|W(e)|$. Observe that each Steiner node $\ell$ has precisely one marked child edge $m(\ell)$. We let the \emph{cost} $c(\ell)$ of $\ell$ be $E[H_{w(m(\ell))}]$. The following bound on the approximation ratio holds.
\begin{lemma}\label{lem:apx1}
$
apx\leq 2\eps+\frac{1+\eps}{|OPT|} \sum_{\ell\in OPT}c(\ell).
$
\end{lemma}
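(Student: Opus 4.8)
The plan is to combine Lemma \ref{lem:SteinerApx} with the accounting implied by the reduction of Corollary \ref{cor:SteinerReduction}, and then to re-index the sum over edges as a sum over Steiner nodes. First I would apply Lemma \ref{lem:SteinerApx} to the feasible Steiner tree $OPT_{ST}=(T\cup OPT,F)$ together with the marking scheme just defined: this gives that the cost of $APX_{ST}$ is at most $(1+\eps)\sum_{e\in F}E[H_{w(e)}]$. Next I would split $F=F_{mar}\cup F_{unm}$. For every unmarked edge $e$ we have $w(e)=1$, so $H_{w(e)}=1$ deterministically, and these edges contribute exactly $(1+\eps)|F_{unm}|$ to the bound. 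For the marked edges, I would use the bijection between marked edges and Steiner nodes: each Steiner node $\ell\in OPT$ has exactly one marked child edge $m(\ell)$, and conversely every marked edge is $m(\ell)$ for a unique $\ell$ (the root $r$ is adjacent to a terminal and is not a child, so it has no marked parent edge, but it still owns a marked child edge; every non-root node is a child and, being a Steiner node, also owns one marked child edge as long as it has children — and in $OPT_{ST}$ with terminals as leaves, a Steiner node with no children would be an isolated leaf, impossible in a connected tree on at least two terminals, unless it equals... this needs a one-line sanity check). Hence $\sum_{e\in F_{mar}}E[H_{w(e)}]=\sum_{\ell\in OPT}E[H_{w(m(\ell))}]=\sum_{\ell\in OPT}c(\ell)$ by the definition $c(\ell)=E[H_{w(m(\ell))}]$.

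Putting these together yields that $APX_{ST}$ has cost at most $(1+\eps)\bigl(|F_{unm}|+\sum_{\ell\in OPT}c(\ell)\bigr)$. Now I would bound $|F_{unm}|$: since $|F|=|OPT_{ST}|=|OPT|+t-1$ and there are exactly $|OPT|$ marked edges (one per Steiner node), we get $|F_{unm}|=t-1$. Thus the Steiner cost is at most $(1+\eps)(t-1+\sum_{\ell\in OPT}c(\ell))$. Then I would pass back to $\CacAP$ via Corollary \ref{cor:SteinerReduction}: $|APX|=|APX_{ST}|-t+1\le (1+\eps)(t-1+\sum_{\ell\in OPT}c(\ell))-t+1 = (1+\eps)\sum_{\ell\in OPT}c(\ell)+\eps(t-1)$. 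Finally, using $|OPT|\ge t/2$ (each degree-$2$ node of the cactus needs an incident link, and a link covers at most two such nodes), we have $t-1\le t\le 2|OPT|$, so $\eps(t-1)\le 2\eps|OPT|$. Dividing through by $|OPT|$ gives
\[
apx=\frac{|APX|}{|OPT|}\le 2\eps+\frac{1+\eps}{|OPT|}\sum_{\ell\in OPT}c(\ell),
\]
as claimed.

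The main thing to be careful about is the clean bijection between marked edges and Steiner nodes, and in particular the bookkeeping that $|F_{unm}|=t-1$: this relies on the marking scheme marking \emph{exactly one} child edge per Steiner node (true by construction), on $OPT_{ST}$ having all terminals as leaves (Lemma \ref{lem:deg1Terminals}), and on the root being chosen as a Steiner node adjacent to a terminal so that no edge is ``double-counted'' as both a marked parent edge and something else. Everything else is routine arithmetic; the only genuinely substantive inputs are Lemma \ref{lem:SteinerApx} (quoted from \cite{BGRS13}) and the two-to-one covering bound $|OPT|\ge t/2$, both already available.
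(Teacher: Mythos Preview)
Your proof is correct and follows essentially the same route as the paper: apply Lemma~\ref{lem:SteinerApx} to $OPT_{ST}$, split $F$ into marked and unmarked edges (the latter contributing exactly $t-1$ since $w(e)=1$), pass back to $\CacAP$ via Corollary~\ref{cor:SteinerReduction}, and absorb the leftover $\eps(t-1)$ term using $|OPT|\ge t/2$. Your added sanity check on the bijection (every Steiner node has a child) is a detail the paper leaves implicit; it holds because a Steiner leaf in $OPT_{ST}$ could be deleted, yielding via Corollary~\ref{cor:SteinerReduction} a $\CacAP$ solution smaller than $OPT$.
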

\begin{proof}
Recall that by Lemma \ref{lem:SteinerApx} the expected cost of the computed Steiner tree $APX_{ST}$ is, modulo a factor $(1+\eps)$, at most
\begin{eqnarray*}
& E[\sum_{e\in F}H_{w(e)}]=E[\sum_{e\in F_{mar}}H_{w(e)}+\sum_{e\in F_{unm}}H_{w(e)}]\\
= & E[\sum_{e\in F_{mar}}H_{w(e)}+|F_{unm}|]=E[\sum_{e\in F_{mar}}H_{w(e)}]+t-1.
\end{eqnarray*}
In the second-last equality above we used the fact that $w(e)=1$ deterministically for an unmarked edge, and in the last equality above the fact that there are precisely $|OPT|$ marked edges and consequently exactly $t-1$ unmarked ones. From $APX_{ST}$ we derive a feasible solution $APX$ to the input instance of cost $|APX|=|APX_{ST}|-1+t$ by Corollary \ref{cor:SteinerReduction}. Hence 
$$
|APX|\leq (1+\eps)(E[\sum_{e\in F_{mar}}H_{w(e)}]+t-1)-1+t\leq (1+\eps)E[\sum_{e\in F_{mar}}H_{w(e)}]+2\eps |OPT|. 
$$
In the last inequality above we used the trivial lower bound $|OPT|\geq t/2$ that we mentioned earlier. The claim follows since by definition $\sum_{e\in F_{mar}}E[H_{w(e)}]=
\sum_{\ell\in OPT}c(\ell)$.
\end{proof}
From the above lemma, modulo factors $(1+\eps)$, the approximation ratio of our algorithm is given by the \emph{average cost} of  Steiner nodes. The following lemma gives a generic upper bound on the cost for each non-root Steiner node based on the degree sequence of its ancestors\footnote{Observe that for the root $r$, $c(r)=H_{d(r)-1}$ deterministically.}. 
\begin{lemma}\label{lem:costPerNode}
Given a non-root Steiner node $\ell$, let $\ell_q$ be the lowest proper ancestor\footnote{Observe that this ancestor exists since the root has this property by assumption.} of $\ell$ with $t(\ell_q)>0$. Let $\ell=\ell_1,\ell_2,\ldots,\ell_q$, $q\geq 2$, be the simple path between $\ell$ and $\ell_q$, and let $d_i=d(\ell_i)$. Then
$$
c(\ell)=\sum_{h=1}^{q-2}\frac{(d_{h+1}-1)H_{d_1+\ldots+d_h-h+1}}{d_2\cdot \ldots \cdot d_{h+1}}+\frac{H_{d_1+\ldots+d_{q-1}-q+2}}{d_2\cdot \ldots \cdot d_{q-1}}.\label{eqn:c(v)}
$$
\end{lemma}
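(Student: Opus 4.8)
The plan is to compute $c(\ell) = E[H_{w(m(\ell))}]$ directly by analyzing the random variable $w(m(\ell))$, the size of the witness set of the marked child edge of $\ell$. The key observation is that the witness set $W(e)$ of an edge $e$ consists of terminal pairs $\{t',t''\}$ whose connecting path in $OPT_{ST}$ passes through $e$ and contains exactly one unmarked edge. Since $\ell_q$ is the lowest proper ancestor of $\ell$ with a terminal child, the marked edge $m(\ell)$ goes from $\ell$ down to one of its children, and the relevant witness pairs must consist of one terminal in the subtree below $m(\ell)$ and one terminal reached by going \emph{up} from $\ell$ along the path $\ell_1,\ldots,\ell_q$ and then down through some sibling branch — the single unmarked edge on such a path being the edge connecting some $\ell_h$ to the branch from which the second terminal descends.

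**First I would** set up the recursion on the spine $\ell = \ell_1, \ell_2, \ldots, \ell_q$. Walking up from $\ell_1$: each step $\ell_h \to \ell_{h+1}$ is along a marked edge with probability $\frac{1}{d_{h+1}}$ (for $h+1 < q$, where $\ell_{h+1}$ has no terminal child, so it marks a uniformly random Steiner child, and $\ell_h$ is one of its $d_{h+1}$ children) — note that for $h \le q-2$ the node $\ell_{h+1}$ has $t(\ell_{h+1})=0$ by minimality of $q$, so all its children are Steiner and the marking is uniform over all $d_{h+1}$ of them. Conditioned on all the edges $\ell_1\ell_2, \ldots, \ell_h\ell_{h+1}$ being marked (probability $\frac{1}{d_2 \cdots d_{h+1}}$) but the edge $\ell_{h+1}\ell_{h+2}$ being unmarked, the path from $\ell_1$ up to $\ell_{h+1}$ consists entirely of marked edges, so any terminal $t''$ hanging (through a chain of marked edges) off one of the $d_{h+1}-1$ non-spine children of $\ell_{h+1}$ contributes a witness pair with any of the $w(m(\ell))$-many... — more precisely I need to track how many terminals sit at the "other end" and realize that $w(e)$ counts pairs but with one endpoint fixed in the subtree of $m(\ell)$, the count of distinct witness sizes telescopes. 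The term $\frac{(d_{h+1}-1)H_{d_1+\ldots+d_h-h+1}}{d_2\cdots d_{h+1}}$ should arise as: probability $\frac{1}{d_2\cdots d_{h+1}}$ that the spine is marked up to $\ell_{h+1}$, times the number $d_{h+1}-1$ of sibling branches at $\ell_{h+1}$, times the harmonic contribution $H_{d_1+\ldots+d_h-h+1}$ where $d_1+\ldots+d_h-h+1$ counts the total number of leaves/terminals reachable through marked edges in the subtree rooted at $\ell_{h+1}$ minus the spine — this count being exactly the number of marked edges accumulated along the way, each internal node $\ell_i$ on or branching from the spine contributing $d_i - 1$ new marked-edge descendants and the $-h+1$ correction accounting for double-counting along the spine.

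**The main obstacle** I anticipate is getting the harmonic-number arguments exactly right — i.e., proving that the witness set $W(m(\ell))$, conditioned on the spine being marked exactly up to $\ell_h$ and then breaking, has size distributed so that $E[H_{w(m(\ell))}]$ picks up precisely $H_{d_1+\ldots+d_h-h+1}$. This requires the combinatorial identity from \cite{BGRS13} that relates the expected harmonic number of a witness set to a sum over the "marked subtree" structure: when $j$ terminals are attached via all-marked paths below a node and one climbs through them one at a time, the witness sizes realized are $1, 2, \ldots, j$, contributing $\frac{1}{j}(H_1 + \cdots)$-type averages, which collapse via $\sum_{i=1}^j \frac{1}{j}H_i$-style telescoping into a single $H$ term. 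I would first verify the base case $q = 2$ (where $\ell$'s parent $\ell_2$ has a terminal child, the marked edge of $\ell_2$ is a terminal edge, and $c(\ell) = \frac{H_{d_1}}{d_2\cdots d_{q-1}}$ with the empty product equal to $1$, matching the second term), then induct on $q$, peeling off the top node $\ell_q$ and relating the spine of length $q$ to one of length $q-1$, being careful that the special node $\ell_q$ (having $t(\ell_q) > 0$) marks a terminal edge rather than a Steiner edge, which is why the top of the formula has the single lumped term $\frac{H_{d_1+\ldots+d_{q-1}-q+2}}{d_2\cdots d_{q-1}}$ with $H$-index shifted by one relative to the generic term (reflecting that when the spine is marked all the way to $\ell_{q-1}$ and $\ell_q$ marks its terminal child, \emph{all} $d_1 + \ldots + d_{q-1} - q + 2$ marked-reachable terminals below $\ell_q$ excluding that terminal child pair up with it).
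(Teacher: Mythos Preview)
Your high-level plan---condition on how far up the spine from $\ell$ the chain of marked edges extends---is exactly what the paper does. However, you are overcomplicating the argument and misreading the structure of the formula. The crucial observation you are missing is that once you condition on $\{\ell_0,\ell_1\},\{\ell_1,\ell_2\},\ldots,\{\ell_{h-1},\ell_h\}$ being the \emph{maximal} sequence of consecutive marked edges containing $e=m(\ell_1)$ (so $\{\ell_h,\ell_{h+1}\}$ is unmarked), the value $w(e)$ is \emph{deterministic}: it equals $d_1+\cdots+d_h-h+1$. Indeed, the witnesses of $e$ are in bijection with the unmarked edges incident to the marked chain $\ell_0,\ell_1,\ldots,\ell_h$; each $\ell_i$ with $1\le i\le h$ contributes $d_i-1$ unmarked child edges, and $\ell_h$ contributes one additional unmarked parent edge. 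There is no averaging over witness sizes, no telescoping identity from \cite{BGRS13}, and no induction on $q$ needed.

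In particular, your reading of the factor $(d_{h+1}-1)$ as a ``count of sibling branches at $\ell_{h+1}$'' is incorrect: it is the numerator of the probability $\tfrac{d_{h+1}-1}{d_{h+1}}$ that $\ell_{h+1}$ does \emph{not} mark the spine edge to $\ell_h$, i.e., the probability that the chain stops at $\ell_h$. Thus the $h$-th summand is simply
\[
\Pr[\text{maximal marked chain ends at }\ell_h]\cdot H_{d_1+\cdots+d_h-h+1},
\]
where the probability equals $\tfrac{1}{d_2}\cdots\tfrac{1}{d_h}\cdot\tfrac{d_{h+1}-1}{d_{h+1}}$ for $1\le h\le q-2$ (since $t(\ell_i)=0$ for $2\le i\le q-1$, each such $\ell_i$ marks a uniformly random one of its $d_i$ children) and $\tfrac{1}{d_2}\cdots\tfrac{1}{d_{q-1}}$ for $h=q-1$ (since $\ell_q$ has a terminal child and therefore deterministically leaves $\{\ell_{q-1},\ell_q\}$ unmarked). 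These events partition the sample space, and the stated formula is nothing more than the law of total expectation applied to $H_{w(e)}$. Your proposed decomposition---probability that the spine is marked up to $\ell_{h+1}$, times a sibling count, times a harmonic term---does not partition the sample space and cannot be summed to give an expectation.
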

\begin{proof}
By definition $c(\ell)=c(\ell_1)=E[H_{w(e)}]$, where $e=m(\ell_1)=\{\ell_1,\ell_0\}$ is the marked child edge of $\ell_1$. Recall that $W(e)$ contains one entry for each path in the tree that contains $e$ and precisely one unmarked edge.  In our specific case, condition on $\{\ell_0,\ell_1\},\{\ell_1,\ell_2\},\ldots,\{\ell_{h-1},\ell_h\}$ being a maximal sequence of consecutive marked edges. Notice that by construction $\{\ell_{q-1},\ell_q\}$ is unmarked (since $\ell_q$ has a terminal child by definition), hence $h\leq q-1$. In this case $w(e)=d_1+\ldots+d_h-(h-1)$. For $h<q-1$, the mentioned event happens with probability $\frac{1}{d_2}\cdot \ldots \cdot  \frac{1}{d_{h}} \cdot \frac{d_{h+1}-1}{d_{h+1}}$. For $h=q-1$, this probability is $\frac{1}{d_{2}}\cdot \ldots \cdot \frac{1}{d_{h}}$. The claim follows by computing the expectation of $H_{w(e)}$.
\end{proof}    

We next provide an upper bound on $c(\ell)$ as a function of $d(\ell)$ only. Let us define the following variant of $H_i$:
$$
\hat{H}_i:=\frac{1}{2}H_i+\frac{1}{4}H_{i+1}+\ldots=\sum_{j\geq 0}\frac{1}{2^{j+1}}H_{i+j}.
$$
One has that $\hat{H}_1=\ln(4)$ and $\hat{H}_{j+1}=2\hat{H}_{j}-H_j$. Notice that, modulo an additive $\eps$, $\hat{H}_1$ is precisely the approximation factor for Steiner tree achieved in \cite{BGRS13}. The first few approximate values of $\hat{H}_i$ are $\hat{H}_1<1.3863$, $\hat{H}_2<1.7726$, $\hat{H}_3<2.0452$, $\hat{H}_4< 2.2571$, $\hat{H}_5<2.4308$, $\hat{H}_6<2.5781$, $\hat{H}_7<2.7062$,  and $\hat{H}_8<2.8195$.

The proof of the following lemma, though not entirely trivial, is mostly based on algebraic manipulations and therefore we postpone it to the appendix.

\begin{lemma}\label{lem:badBound}
For any $\ell\in OPT$, $c(\ell)\leq \hat{H}_{d(\ell)}$.
\end{lemma}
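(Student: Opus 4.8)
I would prove the bound $c(\ell)\le \hat H_{d(\ell)}$ by starting from the exact formula for $c(\ell)$ in Lemma~\ref{lem:costPerNode} and showing it is dominated by the quantity one gets in the ``worst case'' where every proper ancestor up to $\ell_q$ has as few children as possible and $\ell_q$ is reached as quickly as possible. Concretely, write $d_1=d(\ell)$ and observe that $c(\ell)$ is, by Lemma~\ref{lem:costPerNode}, a convex-combination-like sum $\sum_{h=1}^{q-1}p_h\,H_{D_h-h+1}$ where $D_h=d_1+\dots+d_h$ and the $p_h$ are probabilities summing to $1$ (since $\sum_{h=1}^{q-2}\frac{d_{h+1}-1}{d_2\cdots d_{h+1}}+\frac{1}{d_2\cdots d_{q-1}}=1$, a telescoping identity I would record first). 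The key monotonicity fact is that $H_{D_h-h+1}$ is increasing in each $d_i$, while the ``tail mass'' $\sum_{h'\ge h}p_{h'}$ is also increasing in the $d_i$ with $i\le h$; so to upper-bound $c(\ell)$ we may pretend every ancestor $\ell_i$ with $2\le i\le q-1$ has exactly $d_i=2$ children and let $q\to\infty$.

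**The reduction to a one-variable inequality.** After setting $d_2=d_3=\dots=2$ in the formula of Lemma~\ref{lem:costPerNode}, the term with denominator $d_2\cdots d_{h+1}=2^{h}$ and numerator $d_{h+1}-1=1$ contributes $2^{-h}H_{d_1+h-1}$ for $h=1,\dots$, and the ``$h=q-1$'' boundary term, as $q\to\infty$, has weight $\to 0$. Reindexing with $j=h-1$ gives $\sum_{j\ge 0}2^{-(j+1)}H_{d_1+j}=\hat H_{d_1}$ exactly, by the definition of $\hat H$. So the content of the lemma is precisely that this degenerate configuration is the worst one. I would make this rigorous by an exchange/induction argument: fix all $d_i$ except one $d_m$ with $m\ge 2$, view $c(\ell)$ as a function of $d_m$, and show it is nondecreasing in $d_m$ (increasing $d_m$ by $1$ only increases the harmonic indices $H_{D_h-h+1}$ for $h\ge m$ and shifts probability mass from early terms $h<m$ to later terms $h\ge m$, which carry weakly larger $H$-values — because along the path the index $D_h-h+1$ is nondecreasing in $h$). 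Symmetrically, extending the path (larger $q$) only moves mass to later, larger-index terms. Both operations increase $c(\ell)$, so the supremum over all ancestor degree sequences is attained in the limit $d_2=d_3=\dots=2$, $q\to\infty$, giving $\hat H_{d_1}=\hat H_{d(\ell)}$.

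**Main obstacle.** The delicate point is the monotonicity in $d_m$: changing $d_m$ changes both a numerator factor ($d_m-1$ appears in the $h=m-1$ term) and a denominator factor ($d_m$ divides all terms with $h\ge m-1$), so one must check that the net effect is a transfer of probability mass from the term $h=m-1$ to the terms $h\ge m$ without decreasing any harmonic index — and simultaneously that the harmonic index of the $h=m-1$ term does not itself depend on $d_m$ (it is $D_{m-1}-m+2$, which indeed does not). One clean way to handle this bookkeeping is to rewrite $c(\ell)$ recursively: if $\ell_1$ has $d_1$ children and we condition on whether the marked edge $m(\ell_2)$ equals $\{\ell_1,\ell_2\}$, we get $c(\ell_1)=\frac{1}{d_2}\bigl(\text{something with index }D_2-1\text{ in place of }d_1\bigr)+\frac{d_2-1}{d_2}H_{d_1}$-type recursion, i.e. an identity of the form $c(\ell_1)=\frac{d_2-1}{d_2}H_{d_1}+\frac{1}{d_2}\,c'(\ell_2)$ where $c'(\ell_2)$ is the analogous cost computed with $\ell_2$'s child-count replaced by $d_1+d_2-1$. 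Then the claim $c(\ell)\le\hat H_{d(\ell)}$ follows by induction on $q$ using the two identities $\hat H_{j+1}=2\hat H_j-H_j$ and $\hat H_1=\ln 4$ recorded in the text, checking the base case and that the recursion is compatible with (in fact achieves equality in the limiting case of) the $\hat H$ recursion. I expect the bulk of the appendix proof to be exactly this inductive verification plus the telescoping identity for the probabilities, which is why the authors call it ``mostly algebraic manipulations.''
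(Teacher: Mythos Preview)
Your recursive identity $c(\ell_1)=\frac{d_2-1}{d_2}H_{d_1}+\frac{1}{d_2}\,c'$, with $c'$ the cost for the shifted sequence $(d_1+d_2-1,d_3,\ldots)$, is exactly the backbone of the paper's proof, and induction on the length of the ancestor sequence is the right plan. There are, however, two genuine gaps.

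First, the monotonicity/exchange argument does not work. The tail mass is $\sum_{h'\ge h}p_{h'}=\frac{1}{d_2\cdots d_h}$, which is \emph{decreasing} in each $d_i$, not increasing as you state; more importantly, $c(\ell)$ is simply not monotone in $d_m$ for $m\ge 2$. With $q=3$ and $d_1=1$ one gets $c=5/4$ at $d_2=2$, $c=23/18$ at $d_2=3$, and $c=61/48<23/18$ at $d_2=4$, so neither direction of coordinate-wise monotonicity holds, and you cannot reach the configuration $d_2=d_3=\cdots=2$ by an exchange argument. (Note also the internal inconsistency: ``nondecreasing in $d_m$'' would put the supremum at large $d_m$, not at $d_m=2$.) The paper sidesteps this entirely: it only appends trailing $2$'s to the \emph{finite} sequence, shows that this can only increase the value, and then runs the recursion on such extended sequences.

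Second, the inductive step needs more than the identity $\hat H_{j+1}=2\hat H_j-H_j$. After the recursion and the inductive hypothesis you must prove
\[
\frac{d_2-1}{d_2}H_{d_1}+\frac{1}{d_2}\,\hat H_{d_1+d_2-1}\le \hat H_{d_1}
\]
for \emph{every} $d_2\ge 2$; the identity you quote handles only $d_2=2$ (with equality). The paper establishes the general case by rewriting $\hat H_{d}=H_{d}+\sum_{j>d}\frac{1}{j\cdot 2^{j-d}}$, which turns the left-hand side into $H_{d_1}+\sum_{j>d_1}\alpha_j/j$ for explicit coefficients $\alpha_j\ge 0$ with $\sum_j\alpha_j=1$ and tail bounds $\sum_{j\ge d_1+i}\alpha_j\le 2^{-(i-1)}$; since $1/j$ is decreasing, this is maximized by the ``all-$2$'s'' coefficients $\beta_j=2^{-(j-d_1)}$, which give exactly $\hat H_{d_1}$. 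That coefficient comparison is the missing piece of your outline.
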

In next subsection we will see that for a carefully defined subset of Steiner nodes $\ell$ it is possible to obtain a better upper bound on $c(\ell)$ than the one provided by Lemma \ref{lem:badBound}. This will be critical in our analysis since the latter bound is not strong enough.

\subsection{Analysis of the Approximation Factor}
\label{sec:analysis}

In this section we upper bound the approximation factor $apx$ as given by  Lemmas \ref{lem:apx1} and \ref{lem:costPerNode}. In order to simplify our analysis, it is convenient to focus our attention on a specific class of well-structured Steiner trees $OPT_{ST}$ (see also Figure \ref{fig:optimalSteiner}). The following lemma shows that this is (essentially) w.l.o.g.
\begin{definition}\label{def:wellStructured}
A rooted Steiner tree is well-structured if, for every Steiner node $\ell$: (1) $\ell$ has at least $2$ children and (2) $\ell$ has $0$ or $2$ terminal children. 
\end{definition}
\begin{lemma}\label{lem:wellStructured}
Let $\rho$ be the supremum of $\rho(OPT_{ST})=\frac{1}{|OPT|}\sum_{\ell\in OPT}c(\ell)$ over Steiner trees $OPT_{ST}=(T\cup OPT,F)$, and $\rho_{ws}$ be the same quantity computed over the subset of well-structured Steiner trees  $OPT_{ST}$ of the mentioned type. Then $\rho\leq \max\{\hH_1,\rho_{ws}\}$. 
\end{lemma}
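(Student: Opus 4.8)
The plan is to prove the lemma by a sequence of structural transformations on an arbitrary rooted Steiner tree $OPT_{ST}$, each of which does not decrease the ratio $\rho(OPT_{ST})$, until we reach either a trivial tree (for which $\rho = \hH_1$) or a well-structured one. Since the suprema are over classes of trees and transformations only move within these classes while not decreasing the ratio, this will give $\rho \le \max\{\hH_1, \rho_{ws}\}$. First I would recall that $|OPT|$ equals the number of Steiner nodes and equals the number of marked edges, so $\rho(OPT_{ST}) = \frac{1}{|OPT|}\sum_{\ell \in OPT} c(\ell)$ is exactly an average of per-node costs; this ``averaging'' viewpoint is what makes local surgery legitimate, because deleting or adding one Steiner node changes the numerator by a controlled amount and the denominator by exactly one.

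The transformations I would use are two. \emph{(Eliminating Steiner nodes with fewer than $2$ children.)} Suppose some Steiner node $\ell$ has exactly one child (it cannot have zero, since terminals are leaves and internal structure must eventually reach terminals — or rather, a childless Steiner node is useless and can simply be deleted). If $\ell$ has a single child $c$, contract the edge $\{\ell, c\}$: this removes $\ell$ from the set of Steiner nodes. One must check that the costs $c(\ell')$ of the remaining nodes do not decrease on average. The key point from Lemma~\ref{lem:costPerNode} is that $c(\ell')$ depends only on the degree sequence $d_1,\ldots,d_{q-1}$ along the path from $\ell'$ up to its lowest ancestor with a terminal child, and the relevant quantity $H_{d_1+\ldots+d_h-h+1}$ is governed by the \emph{sums} $d_1 + \ldots + d_h - h$; contracting a degree-$1$ Steiner node (degree $2$ counting the parent edge, i.e.\ $d(\ell) = 1$) leaves all these sums unchanged because it removes a node contributing $d_i - 1 = 0$ to the sum while also shortening the path. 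Hence the costs of descendants are unaffected, one Steiner node with $c(\ell) = \hH_1 = \ln 4$ is removed (a degree-$1$ node always has cost exactly $\hH_1$ by Lemma~\ref{lem:badBound} with equality, or by direct computation), and an average that removes a term equal to $\ln 4$ satisfies: either the new ratio is still $\ge$ the old one, or the old ratio was $\le \ln 4 = \hH_1$, which is fine. \emph{(Making terminal children come in pairs.)} If a Steiner node $\ell$ has exactly one terminal child $t$ and $s(\ell) = d(\ell) - 1 \ge 1$ Steiner children, I would pick one Steiner child $\ell''$ and attach a fresh ``dummy'' terminal; one checks via Lemma~\ref{lem:costPerNode} that the cost of every node weakly increases (adding a terminal child to a node on the path only increases the relevant $d_i$ and moves the ``lowest ancestor with a terminal child'' downward, both of which increase $c$), while $|OPT|$ is unchanged, so the ratio does not decrease. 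Iterating handles all such nodes; the case where $\ell$ has a lone terminal child and no Steiner child is already covered by the first transformation (degree $1$).

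The main obstacle I expect is verifying monotonicity of $c(\ell')$ under these surgeries \emph{uniformly} over all descendants $\ell'$, since $c$ is a somewhat delicate sum of harmonic-number terms weighted by inverse products of degrees (Lemma~\ref{lem:costPerNode}), and a local change to the tree can shift the ``cutoff'' ancestor $\ell_q$ for many descendants at once. The clean way around this is to argue purely at the level of the formula in Lemma~\ref{lem:costPerNode}: fix a descendant $\ell'$, write its cost as a function of the degree sequence and the position of the first terminal-bearing ancestor, and check term-by-term that (a) contracting a $d = 1$ node is a no-op on this function, and (b) inserting a terminal child at some ancestor, hence incrementing one $d_h$ by one and possibly truncating the path earlier, weakly increases it — the latter because $H_i$ is increasing in $i$ and the coefficients are nonnegative. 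One should also double-check the boundary/edge cases: the root is required to have a terminal child (this is why the ``lowest ancestor with $t(\ell_q)>0$'' always exists), so the root-adjacent surgery must preserve that property, which it does since we only add terminals, never remove them, and we never contract the root (it has $d(r) \ge 1$ and if $d(r) = 1$ the whole tree is essentially a path that reduces to the trivial case with $\rho = \hH_1$). Finally, I would note the transformations terminate because each strictly decreases the number of ``bad'' Steiner nodes (those violating (1) or (2) of Definition~\ref{def:wellStructured}) without creating new ones, and conclude that the supremum is attained (in the limit) over well-structured trees or is bounded by $\hH_1$, giving $\rho \le \max\{\hH_1, \rho_{ws}\}$.
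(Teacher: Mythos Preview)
Your overall strategy matches the paper's: reduce to well-structured trees via two local surgeries while controlling the average cost, using the bound $c(\ell)\le\hH_1$ for the degree-$1$ nodes you contract. (A small correction: $c(\ell)=\hH_1$ is not an equality---for instance if the parent of $\ell$ already has a terminal child then $c(\ell)=H_1=1$---but your averaging argument only needs the inequality, which is Lemma~\ref{lem:badBound}.)

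The genuine gap is in your second transformation. When $\ell$ has exactly one terminal child, you attach a dummy terminal to a \emph{Steiner child} $\ell''$ of $\ell$. This does not repair the violation at $\ell$: afterwards $\ell$ still has exactly one terminal child, so you make no progress toward condition~(2). Worse, the monotonicity claim fails. If $\ell''$ previously had $t(\ell'')=0$, then for any descendant $\ell'$ of $\ell''$ the cutoff ancestor $\ell_q$ in Lemma~\ref{lem:costPerNode} drops from $\ell$ down to $\ell''$, i.e.\ the path is truncated. Since $d(\ell'')$ only enters as the new $d_{q'}$, which does not appear in the formula, the effect is pure truncation; a short telescoping computation from the formula gives that the old cost minus the new cost equals $\frac{1}{d_2\cdots d_{q-1}}(H_{S_{q-1}}-H_{S_{q-2}})\ge 0$, so $c(\ell')$ \emph{decreases}. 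The paper's fix is to attach the dummy terminal to $\ell$ itself: since $\ell$ already had $t(\ell)>0$, it was already the cutoff for its descendants, and $d_q$ is absent from the formula, so descendant costs are untouched while $c(\ell)$ strictly increases. The paper also enforces (2) before (1), which guarantees that any degree-$1$ Steiner node has a Steiner child, so the contraction in (1) is a clean Steiner--Steiner edge contraction; your plan to handle ``one terminal child, no Steiner child'' via contraction would instead convert a Steiner child of the parent into a terminal child, changing the parent's marking distribution and hence many other costs.
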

\begin{proof}
Recall that in $OPT_{ST}$ each Steiner node $\ell$ has at most $2$ terminal children. Consider any such tree where some Steiner node $\ell'$ has precisely one terminal child $t$. Consider the tree $OPT'_{ST}$ which is obtained from $OPT_{ST}$ by appending to $\ell'$ a second terminal child $t'$. Observe that the value of $c(\ell)$ does not decrease for any $\ell$, and it increases for $\ell=\ell'$. Thus $\rho(OPT'_{ST})>\rho(OPT_{ST})$. Hence $\rho$ is equal to the supremum of $\rho(OPT_{ST})$ over the subfamily of trees that satisfies (2) in Definition \ref{def:wellStructured}. 

Now consider any tree $OPT_{ST}$ that satisfies (2), and let $o(OPT_{ST})$ be the number of its Steiner nodes with precisely one child. We prove by induction on $o(OPT_{ST})$ that $\rho(OPT_{ST})\leq \max\{\hH_1,\rho_{ws}\}$. The claim is trivially true for $o(OPT_{ST})=0$ since in this case $OPT_{ST}$ is well-structured. Assume the claim is true up to $q-1\geq 0$, and consider $OPT_{ST}=(T\cup OPT,F)$ with $o(OPT_{ST})=q$. Let $\ell'$ be any Steiner node with precisely one child $\ell''$. Observe that $\ell''$ has to be a Steiner node as well by (2), and that $c(\ell')\leq \hH_1$ by Lemma \ref{lem:badBound}. Consider the tree $OPT'_{ST}=(T\cup OPT',F')$ obtained by contracting edge $(\ell',\ell'')$. We observe that $OPT'_{ST}$ satisfies (2), $o(OPT'_{ST})=q-1$ and $|OPT'|=|OPT|-1$. Note also that for any Steiner node $\ell$ different from $\ell'$ and $\ell''$ the value of $c(\ell)$ does not change, while for the new node $\tilde{\ell}$ resulting from the contraction one has $c(\tilde{\ell})=c(\ell'')$. We can conclude that
\begin{eqnarray*}
& \frac{1}{|OPT|}\sum_{\ell\in OPT}c(\ell)\leq \frac{1}{|OPT|}(\hH_1+\sum_{\ell\in OPT\setminus \{\ell''\}}c(\ell))\\
= & \frac{1}{|OPT|}(\hH_1+\sum_{\ell \in OPT'}c(\ell))\leq \max\{\hH_1,\frac{1}{|OPT'|}\sum_{\ell \in OPT'}c(\ell)\}\leq \max\{\hH_1,\rho_{ws}\},
\end{eqnarray*}
where in the last inequality we used the inductive hypothesis.
\end{proof}

We next show an upper bound on $\rho_{ws}$ which is strictly greater than $\hH_1$. It then follows from Lemma \ref{lem:wellStructured} that the same upper bound holds on $\rho$. For this goal, we next assume that $OPT_{ST}$ is well-structured.

The upper bound on $c(\ell)$ from Lemma \ref{lem:badBound} is not sufficient to achieve a good approximation factor. In order to achieve a tighter bound, we consider the following classification of the Steiner nodes (see also Figure \ref{fig:optimalSteiner}). 
\begin{definition}
A Steiner node $\ell'$ is a \emph{good father} if it has at least one terminal child (hence precisely $2$ such children by the above assumptions), and a \emph{bad father} otherwise. Each Steiner child $\ell$ of a good father $\ell'$ is \emph{good}, and all other Steiner nodes are \emph{bad}. Let $OPT_{gf}$, $OPT_{bf}$, $OPT_{g}$ and $OPT_{bad}$ denote the sets of good fathers, bad fathers, good nodes and bad nodes, resp.
\end{definition}
Notice that the above classification is not affected by the random choices in the marking scheme. For good nodes, the analysis of the cost can be refined as follows.
\begin{lemma}\label{lem:goodBound}
For any $\ell\in OPT_{g}$, $c(\ell)\leq H_{d(\ell)}$.
\end{lemma}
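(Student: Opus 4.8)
The plan is to derive the bound directly from Lemma~\ref{lem:costPerNode}, specialized to the case $q=2$; in fact the inequality will come out as an equality. The first step is the key observation that a good node $\ell$ has, by definition, a parent $\ell'$ which is a good father, and a good father has at least one (hence, since we work with well-structured trees, exactly two) terminal children, so $t(\ell')>0$. Consequently, in the notation of Lemma~\ref{lem:costPerNode}, the lowest proper ancestor $\ell_q$ of $\ell$ with $t(\ell_q)>0$ is $\ell'$ itself, which forces $q=2$ and $\ell_2=\ell'$.

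The second step is the routine substitution $q=2$ into the closed form of Lemma~\ref{lem:costPerNode}. The summation $\sum_{h=1}^{q-2}(\cdots)$ is over the empty range and hence vanishes, and the remaining term
$$
\frac{H_{d_1+\ldots+d_{q-1}-q+2}}{d_2\cdot\ldots\cdot d_{q-1}}
$$
collapses: the denominator $d_2\cdot\ldots\cdot d_{q-1}$ is an empty product equal to $1$, and the index is $d_1+\ldots+d_{q-1}-q+2 = d_1 - 2 + 2 = d_1 = d(\ell)$. This yields $c(\ell)=H_{d(\ell)}$, hence in particular $c(\ell)\le H_{d(\ell)}$.

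As a consistency check --- and to give a self-contained argument independent of Lemma~\ref{lem:costPerNode} --- one can instead unfold $c(\ell)=E[H_{w(m(\ell))}]$ directly. The maximal chain of marked edges containing $m(\ell)$ cannot be extended above $\ell$, because the edge between $\ell$ and $\ell'$ is unmarked (the good father $\ell'$ marks the edge to one of its two terminal children instead). Therefore a pair of terminals lies in the witness set of $m(\ell)$ exactly when its path leaves this marked chain through a single unmarked edge incident to $\ell$: either one of the $d(\ell)-1$ unmarked edges from $\ell$ to its remaining children (each continued by the marked chain that descends to a unique terminal leaf) or the unmarked edge from $\ell$ up to $\ell'$ followed by $\ell'$'s marked terminal edge. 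This gives exactly $d(\ell)$ witness pairs, deterministically, so again $c(\ell)=H_{d(\ell)}$.

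I do not expect a real obstacle here: this is the ``easy'' refinement of Lemma~\ref{lem:badBound} (the general bad-node case, with its $\hat{H}_i$ recursion, is the genuinely delicate one). The only places calling for care are the empty-sum/empty-product bookkeeping in the $q=2$ specialization and making explicit that the hypothesis ``$\ell$ is good'' is precisely what pins the relevant ancestor down to the parent $\ell'$.
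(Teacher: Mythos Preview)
Your proposal is correct. Your second, ``self-contained'' argument is exactly the paper's proof: since the good father $\ell'$ deterministically marks an edge to a terminal child, the edge $\{\ell',\ell\}$ is unmarked, so $w(m(\ell))=d(\ell)$ deterministically and $c(\ell)=H_{d(\ell)}$. Your first argument, specializing Lemma~\ref{lem:costPerNode} to $q=2$, is a valid alternative that the paper does not spell out; it buys nothing extra here but is a clean sanity check that Lemma~\ref{lem:costPerNode} is consistent with the direct computation. (The paper also records the root case $w(m(r))=d(r)-1$, but since the root is by definition bad this is not needed for the lemma as stated.)
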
   
\begin{proof}
Suppose $\ell$ has a parent $\ell'$, which is a good father by definition. This implies that the edge $(\ell',\ell)$ is deterministically unmarked, hence $w(m(\ell))=d(\ell)$ deterministically. If $\ell$ has no parent (i.e.\ja{,} it is the root $r$), then $w(m(\ell))=d(\ell)-1$. The claim follows.
\end{proof}
Putting everything together, we obtain the following.
\begin{lemma}\label{lem:apx2}
$apx\leq 2\eps+\frac{1+\eps}{|OPT|}\sum_{\ell\in OPT}c'(\ell)$ where 
$
c'(\ell)=\begin{cases}H_{d(\ell)} & \text{if }\ell\in OPT_g;\\
\hat{H}_{d(\ell)} & \text{if }\ell\in OPT_b.\end{cases}$ 
\end{lemma}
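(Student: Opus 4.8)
The plan is to combine the earlier results essentially mechanically: Lemma~\ref{lem:apx1} already gives
$apx\leq 2\eps+\frac{1+\eps}{|OPT|}\sum_{\ell\in OPT}c(\ell)$, so it suffices to show that replacing each $c(\ell)$ by the claimed $c'(\ell)$ only increases the sum, i.e.\ that $c(\ell)\leq c'(\ell)$ for every $\ell\in OPT$. First I would split the sum over $OPT=OPT_g\cup OPT_b$ (noting that, in a well-structured tree, every Steiner node is either good or bad, and that the root $r$ is a bad node since its parent does not exist — or, more carefully, handle $r$ separately, as footnoted earlier, using $c(r)=H_{d(r)-1}\le H_{d(r)}=c'(r)$ so the root causes no trouble regardless of how it is classified).

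For $\ell\in OPT_g$, the bound $c(\ell)\leq H_{d(\ell)}=c'(\ell)$ is exactly the content of Lemma~\ref{lem:goodBound}. For $\ell\in OPT_b$, the bound $c(\ell)\leq \hat H_{d(\ell)}=c'(\ell)$ is exactly the content of Lemma~\ref{lem:badBound} (which in fact holds for \emph{all} $\ell\in OPT$, so a fortiori for bad nodes). Hence termwise $c(\ell)\leq c'(\ell)$, and therefore
$\sum_{\ell\in OPT}c(\ell)\leq \sum_{\ell\in OPT}c'(\ell)$. Plugging this into Lemma~\ref{lem:apx1} yields the claimed inequality. I would write this out as a two- or three-line proof: invoke Lemma~\ref{lem:apx1}, then bound each summand via Lemma~\ref{lem:goodBound} or Lemma~\ref{lem:badBound} according to whether $\ell$ is good or bad.

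There is essentially no obstacle here — this lemma is a packaging statement that records the best per-node bound available for each class of node, to be fed into the averaging argument of the next subsection. The only mild subtlety worth a sentence is making sure the classification $OPT=OPT_g\sqcup OPT_b$ is exhaustive (it is, by the definition of good/bad nodes, since a Steiner node is good exactly when its parent is a good father and bad otherwise, with the root declared bad) and that the $(1+\eps)$ and $2\eps$ terms are simply inherited unchanged from Lemma~\ref{lem:apx1}. The real work is deferred: Lemma~\ref{lem:badBound} (the $\hat H_{d(\ell)}$ bound) is the technically involved ingredient, but it is assumed proved in the appendix, and Lemma~\ref{lem:goodBound} was already established. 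So the proof of Lemma~\ref{lem:apx2} itself is short and routine.

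\begin{proof}
By Lemma~\ref{lem:apx1} we have $apx\leq 2\eps+\frac{1+\eps}{|OPT|}\sum_{\ell\in OPT}c(\ell)$, so it suffices to show $c(\ell)\leq c'(\ell)$ for every $\ell\in OPT$. If $\ell\in OPT_g$, then $c(\ell)\leq H_{d(\ell)}=c'(\ell)$ by Lemma~\ref{lem:goodBound}. Otherwise $\ell\in OPT_b$, and $c(\ell)\leq \hat H_{d(\ell)}=c'(\ell)$ by Lemma~\ref{lem:badBound} (recall the root $r$ is a bad node and $c(r)=H_{d(r)-1}\leq \hat H_{d(r)}$). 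Summing over $\ell\in OPT$ gives $\sum_{\ell\in OPT}c(\ell)\leq \sum_{\ell\in OPT}c'(\ell)$, and the claim follows.
\end{proof}
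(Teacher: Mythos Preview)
Your proof is correct and matches the paper's own argument essentially verbatim: the paper's proof is a single sentence invoking Lemma~\ref{lem:apx1} and then replacing each $c(\ell)$ by the upper bounds from Lemmas~\ref{lem:badBound} and~\ref{lem:goodBound}. Your added remarks about the exhaustiveness of the good/bad partition and the treatment of the root are accurate (and indeed the root case is handled explicitly in the paper's proof of Lemma~\ref{lem:badBound}), but the core argument is identical.
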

%\begin{lemma}\label{lem:apx2}
%$apx\leq (1+\eps)\left(\sum_{\ell\in OPT_{g}}\frac{H_{d(\ell)}}{|OPT|}+\sum_{\ell\in OPT_{b}}\frac{\hat{H}_{d(\ell)}}{|OPT|}\right)+2\eps.$
%\end{lemma}
\begin{proof}
It follows from Lemma \ref{lem:apx1}, by replacing $c(\ell)$ as in Lemma \ref{lem:costPerNode} with the upper bounds given by Lemmas \ref{lem:badBound} and \ref{lem:goodBound}.
\end{proof}

%can conclude that $apx$ is upper bounded, modulo a factor $(1+\eps)$, by
%\begin{equation}
%apx\leq (1+\eps)\left(\sum_{\ell\in OPT_{g}}\frac{H_{d(\ell)}}{|OPT|}+\sum_{\ell\in OPT_{b}}\frac{\hat{H}_{d(\ell)}}{|OPT|}\right)\label{eqn:ub2}
%\end{equation}

We rewrite the upper bound from Lemma \ref{lem:apx2} as follows. Let $p\in [0,\hat{H}_2-H_2]$ be a parameter to be fixed later. Intuitively, each good Steiner node $\ell\in OPT_{g}$ pays a \emph{present} $p$ to its (good) father $\ell'\in OPT_{gf}$ to thank $\ell'$ for making itself good. This increases the cost of $\ell$ by $p$. Symmetrically, each good father $\ell'\in OPT_{gf}$ collects presents from its (good) Steiner children and uses them to lower its own cost. Clearly by definition the total modification of the cost is zero. Let us call $c''(\ell)$ the modified costs. Then one obtains the following equality:
\begin{eqnarray}
\frac{1}{|OPT|}\sum_{\ell\in OPT}c'(\ell)=\frac{1}{|OPT|}\sum_{\ell\in OPT}c''(\ell)\label{eqn:ub3}
\end{eqnarray}
where
$$
c''(\ell)=
\begin{cases}
H_{d(\ell)}+p-s(\ell)p & \text{if } \ell\in OPT_{g}\cap OPT_{gf};\\
H_{d(\ell)}+p & \text{if } \ell\in OPT_{g}\cap OPT_{bf}; \\
\hat{H}_{d(\ell)}-s(\ell)p & \text{if } \ell\in OPT_{b}\cap OPT_{gf};\\
\hat{H}_{d(\ell)} & \text{if } \ell\in OPT_{b}\cap OPT_{bf}.
\end{cases}
$$

In order to upper bound \eqref{eqn:ub3}, we partition $OPT$ into groups of nodes as follows (see also Figure \ref{fig:optimalSteiner}). 
\begin{definition}
A Steiner node $\ell$ is \emph{leaf-Steiner} if it has no Steiner children (i.e., $d(\ell)=t(\ell)=2$) and \emph{internal-Steiner} otherwise (i.e., $s(\ell)>0$). We let $OPT_{lf}$ and $OPT_{in}$ be the set of leaf-Steiner and internal-Steiner nodes, resp.
\end{definition}
We associate to each $\ell\in OPT_{in}$ a distinct subset $OPT_{lf}(\ell)$ of precisely $s(\ell)-1$ leaf-Steiner nodes, and let $g(\ell)=\{\ell\}\cup OPT_{lf}(\ell)$ be the \emph{group} of $\ell$. The mapping is constructed iteratively in a bottom-up fashion as follows. Initially all Steiner nodes are unprocessed. We maintain the invariant that the subtree rooted at an unprocessed leaf-Steiner node or at a processed node with unprocessed parent contains precisely one unprocessed leaf-Steiner node. Clearly the invariant holds at the beginning of the process. We consider any unprocessed  internal-Steiner node $\ell$ whose Steiner descendants are either processed or leaf-Steiner nodes. By the invariant, each subtree rooted at a Steiner child of $\ell$ (which is either an unprocessed leaf-Steiner node or a processed internal-Steiner node) contains one unprocessed leaf-Steiner node. Among this set of $s(\ell)$ unprocessed leaf-Steiner nodes, we select arbitrarily a set $OPT_{lf}(\ell)$ of size $s(\ell)-1$ and set $g(\ell)=\{\ell\}\cup OPT_{lf}(\ell)$. All nodes in $g(\ell)$ are marked as processed. Observe that the subtree rooted at $\ell$ still contains an unprocessed leaf-Steiner node, hence the invariant is preserved in the following steps.   
At the end of the process (i.e., after processing the root $r$) there will be precisely one leaf-Steiner node $\ell^*$ which is still unprocessed, which  forms a special group $g(\ell^*)=\{\ell^*\}$ on its own. Notice that the groups define a partition of $OPT$. In particular, $OPT=\{\ell^*\}\cup \bigcup_{\ell\in OPT_{in}}g(\ell)$. Notice also that $|g(\ell)|=s(\ell)$ for all $\ell\in OPT_{in}$ (while $|g(\ell^*)|=1$).

Let $a(\ell)$ be the average value of $c''(\cdot)$ over the elements of $g(\ell)$. Then obviously the maximum value of $a(\ell)$ over the groups upper bounds the average value of $c''(\cdot)$:
\begin{equation}
\frac{1}{|OPT|}\sum_{\ell\in OPT}c''(\ell)\leq \max_{\ell\in OPT_{in}\cup \{\ell^*\}}\{a(\ell)\}.\label{eqn:ub4}
\end{equation}
For $\ell=\ell^*$ one has that $a(\ell^*)=c''(\ell^*)=\hat{H}_2$ if $\ell^*$ is bad, and $a(\ell^*)=c''(\ell^*)=H_2+p\leq \hat{H}_2$ otherwise. For the other groups $g(\ell)$, there is always a subset of $s(\ell)-1$ leaves whose contribution to the cost is at most $\hat{H}_2$ each by the same argument as above. Furthermore, we have to add the cost $c''(\ell)$. We can conclude that:
$$
a(\ell)\leq
\begin{cases}
a_1(s(\ell)):=\frac{H_{s(\ell)+2}+p-s(\ell)p+(s(\ell)-1)\hat{H}_2}{s(\ell)} & \text{if } \ell\in OPT_{g}\cap OPT_{gf};\\
a_2(s(\ell)):=\frac{H_{s(\ell)}+p+(s(\ell)-1)\hat{H}_2}{s(\ell)} & \text{if } \ell\in OPT_{g}\cap OPT_{bf}; \\
a_3(s(\ell)):=\frac{\hat{H}_{s(\ell)+2}-s(\ell)p+(s(\ell)-1)\hat{H}_2}{s(\ell)} & \text{if } \ell\in OPT_{b}\cap OPT_{gf};\\
a_4(s(\ell)):=\frac{\hat{H}_{s(\ell)}+(s(\ell)-1)\hat{H}_2}{s(\ell)} & \text{if } \ell\in OPT_{b}\cap OPT_{bf};\\
\hat{H}_2 & \text{if } \ell=\ell^*.
\end{cases}
$$
In the first and third case above we used the fact that $d(\ell)=s(\ell)+2$ ($\ell$ is a good father, hence has $2$ terminal children), while in the second and fourth case the fact that $d(\ell)=s(\ell)$ ($\ell$ is a bad father, hence has no terminal child). 

We are now ready to prove the main result of this paper. 
\begin{proof}[Proof of Theorem \ref{thr:main}]
Consider the above algorithm. Combining Lemma \ref{lem:apx2} with \eqref{eqn:ub3} and \eqref{eqn:ub4} one gets 
\begin{equation}
apx\leq 2\eps+(1+\eps)\max_{i\geq 1}\{\hat{H}_2,a_1(i),a_2(i),a_3(i),a_4(i)\}.\label{eqn:ub5}
\end{equation}
We need the following technical result (proof in Appendix).
\begin{claim}\label{claim:maximum}
For any $p\in [0,\hat{H}_2-H_2]$, the maximum of $a_1(i)$, $a_2(i)$, $a_3(i)$, and $a_4(i)$ is achieved for $i$ at most $6$, $8$, $6$ and $8$, resp. 
\end{claim}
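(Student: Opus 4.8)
The plan is to analyze each of the four functions $a_1,a_2,a_3,a_4$ separately and show that after some threshold index (at most $6$ or $8$, respectively) the sequence $a_j(i)$ is non-increasing in $i$, so the maximum over all $i\ge 1$ is attained at a small index. The common structure to exploit is that each $a_j$ has the form
$$
a_j(i)=\hat H_2+\frac{b_j(i)-\hat H_2}{i},
$$
where $b_j(i)$ equals $H_{i+2}+p-ip$, $H_i+p$, $\hat H_{i+2}-ip$, or $\hat H_i$ respectively (plus the constant term $-\hat H_2$ from separating out $(i-1)\hat H_2/i = \hat H_2 - \hat H_2/i$). Hence $a_j(i+1)\le a_j(i)$ is equivalent to the discrete concavity-type inequality
$$
\frac{b_j(i+1)-\hat H_2}{i+1}\le \frac{b_j(i)-\hat H_2}{i},
$$
i.e. $i\,b_j(i+1)-(i+1)b_j(i)\le -\hat H_2$. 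So the first step is to write, for each $j$, this one-line reformulation and reduce to bounding $i(b_j(i+1)-b_j(i)) - b_j(i)$.

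The second step is to estimate the increments $b_j(i+1)-b_j(i)$. For the cases involving only harmonic numbers this is clean: $H_{i+1}-H_i=\frac{1}{i+1}$, and for the $p$-terms the increment is exactly $-p$. For the cases involving $\hat H$ we use the recursion $\hat H_{j+1}=2\hat H_j-H_j$ stated in the excerpt, which gives $\hat H_{i+1}-\hat H_i=\hat H_i-H_i$; iterating, $\hat H_{i+k}$ grows roughly like a constant plus $\Theta(\log i)$, and more precisely one has the crude but sufficient bounds $\hat H_i \le \hat H_1 + H_{i-1}$ (provable by induction from the recursion, since $\hat H_2-\hat H_1 = \hat H_1 - H_1 = \ln 4 - 1 \le 1 = H_1$ gives the base and the increment inequality $\hat H_i - H_i \le \frac1i$ propagates). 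Feeding these into $i(b_j(i+1)-b_j(i))-b_j(i)$, every term is either $O(1)$ or $-\Theta(\log i)$ or $-\Theta(ip)$, so for $i$ large enough the whole expression drops below $-\hat H_2 \approx -1.77$; solving the resulting explicit inequality for $i$ pins down the threshold. For $a_1$ and $a_3$ (the "good father" cases, with the extra $-ip$ term) the $-ip$ contributes an extra push downward, but since we only assume $p\ge 0$ we must get the bound from the harmonic part alone, which is why the thresholds $6$ come out the same whether or not $p>0$; for $a_2,a_4$ there is no $-ip$ term and the slower-decaying $\hat H_i$ (in $a_4$) forces the larger threshold $8$. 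One then just checks the finitely many remaining inequalities $a_j(i)\ge a_j(i+1)$ for $i$ below the threshold numerically using the tabulated values $\hat H_1<1.3863,\dots,\hat H_8<2.8195$ and monotonicity in $p$ over the interval $[0,\hat H_2-H_2]$.

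The main obstacle I anticipate is getting the thresholds tight enough: a lazy bound like $\hat H_i \le \hat H_1 + H_{i-1}$ may be too lossy to bring the crossover down to exactly $6$ (resp. $8$), so for the $\hat H$-cases one probably needs the sharper asymptotics $\hat H_i = \ln 4 + \sum_{j=1}^{i-1}(\hat H_j - H_j)$ with the tail $\sum_{j\ge i}(H_{i+j}-H_{i+j-1})2^{-?}$ controlled, or simply to keep the recursion exact and verify the handful of cases $i=6,7,8$ by direct substitution of the tabulated $\hat H$ values rather than through a clean closed form. Also one must be careful that the claimed maximum is over $i\ge 1$ while $a_j$ is only defined for $s(\ell)\ge 1$ (and $s(\ell)\ge 2$ for internal-Steiner nodes that are not leaves — but $a_j(1)$ is still a legitimate upper bound to include), and that the dependence on $p$ is handled uniformly: since each $a_j(i)$ is linear (in fact affine, with slope $\frac{1-i}{i}\le 0$ for $a_1$ and $\frac{1}{i}>0$ for $a_2$, etc.) in $p$, the "worst" $p$ is an endpoint of $[0,\hat H_2-H_2]$, so the whole claim reduces to checking monotonicity at $p=0$ and $p=\hat H_2-H_2$, which is finitely many arithmetic comparisons.

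Once the reduction-to-small-$i$ steps above are in place for all four functions, the claim follows by listing the threshold indices $6,8,6,8$; I would present it as four short paragraphs, one per $a_j$, each carrying out the $i\,b_j(i+1)-(i+1)b_j(i)\le -\hat H_2$ computation, then conclude.
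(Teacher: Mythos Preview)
Your plan coincides with the paper's proof: separate out the additive constant, compute the discrete difference $a_j(i+1)-a_j(i)$, and show its numerator is negative for all $i\ge 6$ (resp.\ $8$) by comparing against the tabulated values of $H_k$ and $\hat H_k$, using $\hat H_{k+1}-\hat H_k=\sum_{j\ge1}2^{-j}/(k+j)$ to control the $\hat H$-increments. The only minor differences are that the paper absorbs $p$ via the single substitution $x:=\hat H_2-p\le\hat H_2$ rather than checking both endpoints, and it performs no check below the threshold (eventual monotonicity already forces the maximum to sit at some small $i$, so your proposed numerical checks for small $i$ are unnecessary).
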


From \eqref{eqn:ub5} and Claim \ref{claim:maximum}, for any $p\in [0,\hat{H}_2-H_2]$, one has
\begin{equation}
apx\leq 2\eps+(1+\eps)\max\{\hat{H}_2,\max_{1\leq i\leq 6}\{a_1(i)\},\max_{1\leq i\leq 8}\{a_2(i)\},\max_{1\leq i\leq 6}\{a_3(i)\},\max_{1\leq i\leq 8}\{a_4(i)\}\}.\label{eqn:ub6}
\end{equation}
Numerically the minimum of the right-hand side of \eqref{eqn:ub6} is achieved for $p\simeq 0.135$, and the two largest values inside the maximum turn out to be $a_2(7)$ and $a_3(1)$.
By imposing $\frac{H_7+6\hat{H}_2+p}{7}=a_2(7)=a_3(1)=\hat{H}_3-p$ one gets $p=\frac{7\hat{H}_3-H_7-6\hat{H}_2}{8}$. For that value of $p$ the value of the maximum is precisely $\frac{H_7+6\hat{H}_2+\hat{H}3}{8}=2\ln 4-\frac{967}{1120}$. The claim follows by scaling $\eps$ properly.
\end{proof}

\newpage

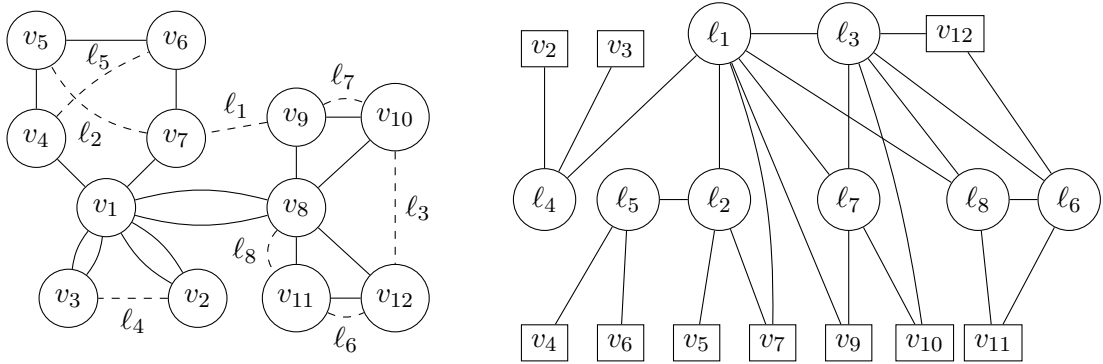
\begin{figure}    
\begin{tikzpicture}
	\tikzstyle{every state}=[fill=white,draw,minimum size=10pt]
\node[state, circle] (1) {$v_1$};
\node[state, circle, below right of=1, node distance=1.7cm] (2) {$v_2$};
\node[state, circle, left of=2, node distance=1.7cm] (3) {$v_3$};
\node[state, circle, right of=1, node distance=2.5cm] (8) {$v_8$};
\node[state, circle, above of=8, node distance=1.2cm] (9) {$v_9$};
\node[state, circle,right of=9, node distance=1.3cm] (10) {$v_{10}$};
\node[state, circle, below of=8, node distance=1.2cm] (11) {$v_{11}$};
\node[state, circle, right of=11, node distance=1.3cm] (12) {$v_{12}$};
\node[state, circle, above right of=1, node distance=1.3cm] (7) {$v_7$};
\node[state, circle, above left of=1, node distance=1.3cm] (4) {$v_4$};
\node[state, circle, above of=7, node distance=1.3cm] (6) {$v_6$};
\node[state, circle, above of=4, node distance=1.3cm] (5) {$v_5$};

\path (1) edge[bend right=15] (8);
\path (1) edge[bend left=15] (8);
\path (1) edge[bend left=15] (2);
\path (1) edge[bend right=15] (2);
\path (1) edge[bend left=15] (3);
\path (1) edge[bend right=15] (3);
\path (8) edge (9);
\path (8) edge (10);
\path (10) edge (9);
\path (8) edge (11);
\path (8) edge (12);
\path (12) edge (11);
\path (1) edge (4);
\path (4) edge (5);
\path (5) edge (6);
\path (6) edge (7);
\path (7) edge (1);

\path [dashed] (9) edge [bend right=0] node[above] {$\ell_1$}(7);
\path [dashed] (7) edge[bend left=25] node[below] {$\ell_2$} (5);
\path [dashed] (4) edge[bend left=12] node[above] {$\ell_5$} (6);
\path[dashed] (2) edge node[below] {$\ell_4$}(3);
\path[dashed] (10) edge node[right]{$\ell_3$}(12);
\path[dashed] (11) edge[bend right=25] node[below] {$\ell_6$} (12);
\path[dashed] (9) edge[bend left=25] node[above] {$\ell_7$} (10);

\path[dashed] (8) edge[bend right=40] node[left] {$\ell_8$} (11);

\end{tikzpicture}
\hspace{20pt}
\begin{tikzpicture}
	\tikzstyle{every state}=[fill=white,draw,minimum size=5pt]

\node[state,  circle] (l1) {$\ell_1$};
\node[state, circle, right of=l1, node distance=1.7cm] (l3) {$\ell_3$};
\node[state, circle, below of=l1, node distance=2.2cm] (l2) {$\ell_2$};
\node[state, circle, left of=l2, node distance=1.2cm] (l5) {$\ell_5$};
\node[state, circle, left of=l5, node distance=1.1cm] (l4) {$\ell_4$};
\node[state, circle, right of=l2, node distance=1.7cm] (l7) {$\ell_7$};
\node[state, circle, right of=l7, node distance=1.7cm] (l8) {$\ell_8$};
\node[state, circle, right of=l8, node distance=1.2cm] (l6) {$\ell_6$};

\node[state, rectangle, above of=l4,node distance= 2cm] (2) {$v_2$};
\node[state, rectangle, right of=2,node distance= 1cm] (3) {$v_3$};
\node[state, rectangle, below of=l4,node distance= 1.9cm] (4) {$v_4$};
\node[state, rectangle, right of=4,node distance= 1cm] (6) {$v_6$};
\node[state, rectangle, right of=6,node distance= 1cm] (5) {$v_5$};
\node[state, rectangle, right of=5,node distance= 1cm] (7) {$v_7$};
\node[state, rectangle, right of=7,node distance= 1cm] (9) {$v_9$};
\node[state, rectangle, right of=9,node distance= 1cm] (10) {$v_{10}$};
\node[state, rectangle, right of=10,node distance= 0.9cm] (11) {$v_{11}$};
\node[state, rectangle, right of=l3,node distance= 1.4cm] (12) {$v_{12}$};

\path (l1) edge[bend left=10] (7); 
\path (l1) edge (9);
\path (l3)edge[bend left=5] (10);
\path (l3)edge (12);
\path (l2) edge (5);
\path (l2) edge (7);
\path (l4) edge (2);
\path (l4) edge (3);
\path (l5) edge (4);
\path (l5) edge (6);
\path (l7) edge (9);
\path (l7) edge (10);
\path (l6) edge (12);
\path (l6) edge (11);

\path (l4) edge (l1);
\path (l3) edge (l7);
\path (l3) edge (l1);
\path(l3) edge (l6);
\path (l2) edge (l5);
\path (l1) edge (l7);
%\path (l3) edge (10);

\path (l8) edge (11);
\path (l8) edge (l6);
\path (l8) edge (l3);
\path (l8) edge (l1);
\path (l1) edge (l2);

\end{tikzpicture}\caption{{\bf (left)} Instance of $\CacAP$, where dashed edges denote links. The projections of $\ell_1$ are $\{v_7,v_1\}$, $\{v_1,v_8\}$ and $\{v_8,v_9\}$. Link $\ell_2$ is crossing with $\ell_1$ and $\ell_5$. {\bf (right)} The Corresponding Steiner tree instance, where square nodes denote terminals.}
\label{fig:reduction}
\end{figure}

\begin{figure}[H]    
\centering\begin{tikzpicture}
	\tikzstyle{every state}=[fill=white,draw,minimum size=15pt]

\node[state, circle] (l4) {$\ell_4$};
\node[state, rectangle, below of=l4, node distance=1.3cm] (v2) {$v_2$};
\node[state, rectangle, right of=v2, node distance=1.3cm] (v3) {$v_3$};
\node[state, circle, left of=v2, node distance=1.3cm] (l1) {$\ell_1$};
\node[state, circle, below of=l1, node distance=1.3cm, left of=l1, node distance=1.5cm] (l2) {$\ell_2$};
\node[state, circle, below of=l1, node distance=1.3cm, right of=l1, node distance=1.5cm] (l3) {$\ell_3$};

\node[state, rectangle, below of=l2, node distance=1.3cm] (v7) {$v_7$};
\node[state, rectangle, left of=v7, node distance=1.1cm] (v5) {$v_5$};
\node[state, circle, right of=v7, node distance=1.1cm, fill=gray] (l5) {$\ell_5$};

\node[state, rectangle, below of=l5, node distance=1.3cm] (v6) {$v_6$};
\node[state, rectangle, left of=v6, node distance=1.1cm] (v4) {$v_4$};

\node[state, circle, below of=l3, node distance=1.3cm, fill=gray] (l6) {$\ell_6$};
\node[state, circle, right of=l6, node distance=1.3cm, fill=gray] (l7) {$\ell_7$};

\node[state, rectangle, below of=l6, node distance=1.3cm] (v12) {$v_{12}$};
\node[state, rectangle, left of=v12, node distance=1cm] (v11) {$v_{11}$};

\node[state, rectangle, below of=l7, node distance=1.3cm] (v9) {$v_{9}$};
\node[state, rectangle, right of=v9, node distance=1cm] (v10) {$v_{10}$};
\path (l4) edge (l1);
\path (l4) edge (v2);
\path[ultra thick] (l4) edge (v3);
\path (l1) edge (l2);
\path[ultra thick] (l1) edge (l3);
\path[ultra thick] (l2) edge (v5);
\path (l2) edge (v7);
\path (l2) edge (l5);
\path (l3) edge (l6);
\path[ultra thick] (l3) edge (l7);
\path[ultra thick] (l5) edge (v4);
\path (l5) edge (v6);
\path (l6) edge (v11);
\path[ultra thick] (l6) edge (v12);
\path[ultra thick] (l7) edge (v9);
\path (l7) edge (v10);

\end{tikzpicture}\caption{A feasible Steiner tree for the instance of Figure \ref{fig:reduction}, which happens to be well-structured. Bold edges denote a possible marking. One has $m(\ell_3)=e:=\{\ell_3,\ell_7\}$, and $W(e)$ contains $\{v_9,v_{12}\}$, $\{v_9,v_{5}\}$ and $\{v_9,v_{3}\}$. Notice that $w(e)=|W(e)|=d(\ell_3)+d(\ell_1)-1$. Leaf-Steiner nodes are drawn in grey. Here $\ell_2$ (resp., $\ell_3$) is a good (resp., bad) father. Consequently $\ell_5$ (resp., $\ell_6$) is good (resp., bad). A feasible grouping is $g(\ell_2)=\{\ell_2\}$, $g(\ell_3)=\{\ell_3,\ell_7\}$, $g(\ell_1)=\{\ell_1,\ell_6\}$, $g(\ell_4)=\{\ell_4\}$, and $g(\ell_5)=\{\ell_5\}$.}
\label{fig:optimalSteiner}
\end{figure}
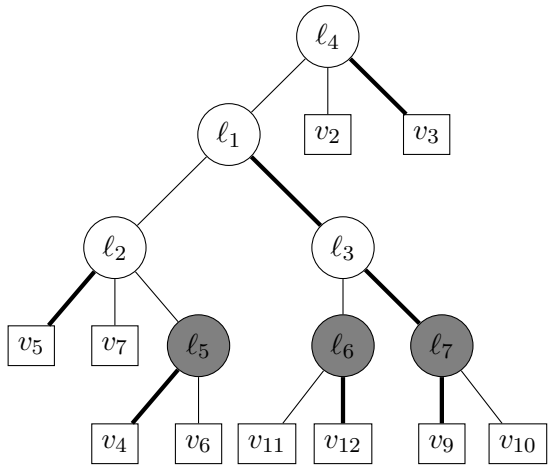

\paragraph{Acknowledgments.} This work is highly in debt with Saket Saurabh. During a visit of the second author to Bergen University a few years ago, Saket mentioned the possibility to use the reduction to Steiner tree to approximate connectivity augmentation problems, possibly with an ad-hoc analysis. The result in this paper follows precisely that path (though finding a good enough analysis was not easy). The second author is also grateful to M. S. Ramanujan and L. Vegh for several helpful discussions on this topic.

%\fabr{Fig 2: added terminal names and marking of edges}

\bibliographystyle{abbrv}% the recommended bibstyle
\bibliography{references}

\appendix

\section{Omitted Proofs from Section \ref{sec:apx}}

\begin{claim}\label{claim:equality}
%The following equality holds:
$ H_{d_1}+\sum_{j= d_1+1}^{\infty}\frac{1}{j\cdot 2^{j-d_1}}=\hat{H}_{d_1}.$\end{claim}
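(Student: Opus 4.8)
The claim to prove is
\[
H_{d_1}+\sum_{j=d_1+1}^{\infty}\frac{1}{j\cdot 2^{j-d_1}}=\hat H_{d_1},
\]
where $\hat H_i=\sum_{j\ge 0}\frac{1}{2^{j+1}}H_{i+j}$. The plan is to expand the right-hand side directly and regroup the double sum, rather than trying to massage the left-hand side.

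\medskip

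First I would write out $\hat H_{d_1}=\sum_{m\ge 0}\frac{1}{2^{m+1}}H_{d_1+m}$ and substitute $H_{d_1+m}=\sum_{j=1}^{d_1+m}\frac1j$. This gives a double sum over pairs $(m,j)$ with $m\ge 0$ and $1\le j\le d_1+m$; I would swap the order of summation so that the outer index is $j$ and the inner index is $m$. For a fixed $j$ with $1\le j\le d_1$, the constraint $j\le d_1+m$ is satisfied by every $m\ge 0$, so the coefficient of $\frac1j$ is $\sum_{m\ge 0}\frac{1}{2^{m+1}}=1$; this contributes exactly $\sum_{j=1}^{d_1}\frac1j=H_{d_1}$, the first term on the left. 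For a fixed $j>d_1$, the constraint $j\le d_1+m$ forces $m\ge j-d_1$, so the coefficient of $\frac1j$ is $\sum_{m\ge j-d_1}\frac{1}{2^{m+1}}=\frac{1}{2^{j-d_1}}$, which contributes $\sum_{j>d_1}\frac{1}{j\cdot 2^{j-d_1}}$, the second term on the left. Adding the two pieces yields the claimed identity.

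\medskip

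The only genuine point requiring a word of care is the interchange of the two summations: since all terms are nonnegative (Tonelli/absolute convergence — $\hat H_{d_1}$ converges because $H_{d_1+m}=O(\log m)$ while $2^{-(m+1)}$ decays geometrically), the rearrangement is valid, so this is not a real obstacle, merely something to note. Everything else is the elementary geometric-series evaluation $\sum_{m\ge k}2^{-(m+1)}=2^{-k}$. I would present the proof as the two-line computation above: substitute the harmonic sum, swap the order, split into $j\le d_1$ and $j>d_1$, evaluate the inner geometric series in each case, and recognize the two resulting terms as the left-hand side.
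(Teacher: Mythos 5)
Your proof is correct and is essentially the same computation as the paper's: both expand the harmonic numbers inside the series defining $\hat{H}_{d_1}$, interchange the order of summation (justified by nonnegativity), and evaluate the resulting geometric tails, the only cosmetic difference being that the paper writes $H_{d_1+i}=H_{d_1}+\sum_{j=1}^{i}\frac{1}{d_1+j}$ rather than fully expanding and then splitting the index range at $j=d_1$.
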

\begin{proof}
Note that 
$$
\hat{H}_{d_1}=
\sum_{i=0}^{\infty}\frac{H_{d_1+i}}{2^{i+1}}
=\sum_{i=0}^{\infty}\frac{H_{d_1}+\sum_{j=1}^{i}\frac{1}{d_1+j}}{2^{i+1}}=H_{d_1}+\sum_{j=1}^{\infty}\frac{\sum_{i=j}^{\infty}\frac{1}{2^{i+1}}}{d_1+j}
=H_{d_1}+\sum_{j=1}^{\infty}\frac{1}{(d_1+j)2^j}.
$$
\end{proof}
\begin{proof}[Proof of Lemma \ref{lem:badBound}]
The claim is trivially true if $\ell$ is the root since in that case $c(\ell)=H_{d(\ell)-1}<\hH_{d(\ell)}$. So assume $\ell$ is not the root.
For a generic sequence $S=(d_1,\ldots,d_k)$ of positive integers, let us define
$$
f(S)=\sum_{j=1}^{k-1}\frac{(d_{j+1}-1)\cdot H_{d_1+d_2+...+d_j-j+1}}{d_2\cdot d_3 \dots d_{j+1}}+\frac{H_{d_1+d_2+\dots+d_k-k+1}}{d_2\cdot d_3\dots d_k}.
$$
Intuitively, this is the right-hand side of the equation in Lemma \ref{lem:costPerNode}. 
For an infinite sequence $S'=(d_1,d_2,\ldots)$ of positive integers, we analogously define 
$$f(S')=\sum_{j=1}^{\infty}\frac{(d_{j+1}-1)\cdot H_{d_1+d_2+...+d_j-j+1}}{d_2\cdot d_3 \dots d_{j+1}}
$$
Given a finite sequence $S=(d_1,\ldots,d_k)$ of the above type, let $\bar{S}=(d_1,\ldots,d_k,2,2,\ldots)$ be its infinite extension where we add an infinite sequence of $2$ at the end.
\begin{claim}\label{finiteToInfinie}
$f(S)\leq f(\bar{S})$.
\end{claim}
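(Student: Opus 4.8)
\textbf{Proof plan for Claim \ref{finiteToInfinie}.} The plan is to compare $f(S)$ and $f(\bar S)$ term by term, noting that the two infinite tails differ only in that $\bar S$ continues with entries equal to $2$ whereas $f(S)$ truncates after the $k$-th entry. First I would fix notation: write $s_j = d_1 + \dots + d_j - j + 1$ for the running ``argument'' of the harmonic numbers (so $s_j$ is nondecreasing in $j$ since each $d_i \geq 1$, and in fact strictly increasing once $d_i\geq 2$), and write $P_j = d_2 \cdot d_3 \cdots d_j$ (with $P_1 = 1$) for the running product in the denominators. In this notation,
\[
f(S) = \sum_{j=1}^{k-1} \frac{(d_{j+1}-1) H_{s_j}}{P_{j+1}} + \frac{H_{s_k}}{P_k},
\]
and $f(\bar S)$ is the analogous infinite sum with $d_i = 2$ for all $i > k$. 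The key observation is that the two sums share the first $k-1$ summands of the series part verbatim, so it suffices to show that the ``residual'' of $f(S)$, namely $\frac{H_{s_k}}{P_k}$, is at most the residual of $f(\bar S)$, namely $\sum_{j \geq k} \frac{(d_{j+1}-1) H_{s_j}}{P_{j+1}}$ evaluated with $d_i = 2$ for $i > k$.

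The second step is to compute that residual of $f(\bar S)$ explicitly. With $d_{k+1} = d_{k+2} = \dots = 2$, we have $P_{k+j} = P_k \cdot 2^{j}$ and $s_{k+j} = s_k + j$ for $j \geq 0$, and each factor $d_{k+j+1} - 1 = 1$. Hence
\[
\sum_{j \geq k} \frac{(d_{j+1}-1) H_{s_j}}{P_{j+1}} = \sum_{i \geq 0} \frac{H_{s_k + i}}{P_k \cdot 2^{\,i+1}} = \frac{1}{P_k}\sum_{i \geq 0}\frac{H_{s_k+i}}{2^{\,i+1}} = \frac{\hat H_{s_k}}{P_k},
\]
by the very definition of $\hat H$. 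So the claim reduces to the inequality $H_{s_k} \leq \hat H_{s_k}$, which is immediate since $\hat H_i = \sum_{j\geq 0} 2^{-(j+1)} H_{i+j} \geq \sum_{j \geq 0} 2^{-(j+1)} H_i = H_i$ because the harmonic numbers are nondecreasing. Dividing by the positive quantity $P_k$ and adding back the shared initial summands gives $f(S) \leq f(\bar S)$.

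I expect the only mild subtlety to be bookkeeping with the indices — making sure the truncated term $H_{s_k}/P_k$ of $f(S)$ is correctly identified with what remains after peeling off the first $k-1$ terms of the series, and that the reindexing $j = k+i$ in the tail of $f(\bar S)$ is done consistently. There is no genuine analytic obstacle: convergence of $f(\bar S)$ is clear because the tail is dominated by a geometric series (the $H_{s_k+i}$ grow only logarithmically), and the inequality itself is just monotonicity of $H$ packaged through the identity from Claim \ref{claim:equality}. One should also double-check the degenerate case $k=1$, where $f(S) = H_{s_1}/P_1 = H_{d_1}$ and $f(\bar S) = \hat H_{d_1}$, so the claim is exactly $H_{d_1}\leq \hat H_{d_1}$ and holds with equality never attained for finite arguments.
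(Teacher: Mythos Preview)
Your proof is correct and follows essentially the same approach as the paper: both cancel the shared first $k-1$ summands and compare the final term $H_{s_k}/P_k$ of $f(S)$ against the infinite tail of $f(\bar S)$, reducing the claim to monotonicity of the harmonic numbers together with the geometric series $\sum_{i\geq 0}2^{-(i+1)}=1$. The only cosmetic difference is that you identify the tail of $f(\bar S)$ exactly as $\hat H_{s_k}/P_k$ and then invoke $H_{s_k}\leq \hat H_{s_k}$, whereas the paper lower-bounds each $H_{s_j}$ in the tail by $H_{s_k}$ before summing; these are the same inequality unpacked in different orders.
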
  
\begin{proof}
By definition
\begin{eqnarray*}
f(\bar{S})-f(S) & = &  \sum_{j=k}^{\infty}\frac{(d_{j+1}-1)\cdot H_{d_1+d_2+...+d_j-j+1}}{d_2\cdot d_3 \dots d_{j+1}}
-\frac{H_{d_1+d_2+...+d_{k}-k+1}}{d_2\cdot d_3\dots d_{k}}\\
& \geq & \sum_{j=k}^{\infty}\frac{(d_{j+1}-1)\cdot H_{d_1+d_2+...+d_{k}-k+1}}{d_2\cdot d_3 \dots d_{j+1}}
-\frac{H_{d_1+d_2+...+d_{k}-k+1}}{d_2\cdot d_3\dots d_{k}}\\
& =& \frac{H_{d_1+d_2+...+d_{k}-k+1}}{d_2\cdot d_3 \dots d_{k}}\sum_{j=1}^{\infty}\frac{1}{2^j}
-\frac{H_{d_1+d_2+...+d_{k}-k+1}}{d_2\cdot d_3\dots d_{k}}\\
& = & \frac{H_{d_1+d_2+...+d_{k}-k+1}}{d_2\cdot d_3 \dots d_{k}}
-\frac{H_{d_1+d_2+...+d_{k}-k+1}}{d_2\cdot d_3\dots d_{k}}=0.
\end{eqnarray*}
%\fab{where in the inequality we used the fact that $d_1+d_2+...+d_j-(j-1)$ is a non-decreasing function of $j$}. Observe that $\sum_{j=\fab{k}}^{\infty}\frac{(d_{j+1}-1)}{d_2\cdot d_3 \dots d_{j+1}}\fab{=\frac{1}{d_2\cdot d_3\dots d_{k}}}\sum_{j=1}^{\infty}\frac{1}{2^j}=\frac{1}{d_2\cdot d_3\dots d_{\fab{k}}}$. Hence the summation above 
%
%Now since $\sum_{j=\fab{k}}^{\infty}\frac{(d_{j+1}-1)}{d_2\cdot d_3 \dots d_{j+1}}=\frac{1}{d_2\cdot d_3\dots d_{\fab{k}}}$, then:
%$$f(S')-f(S)>0.$$
\end{proof}
By the above claim it is sufficient to consider infine sequences of type $\bar{S}$. 
We can also assume w.l.o.g. that all $d_i$, $i\geq 2$, in such sequences are at least $2$ by the following claim. 
\begin{claim}\label{lem:infiniteNoOne}
Let $\bar{S}=(d_1,\ldots,d_k,2,2,\ldots)$ and assume there exists $d_i=1$ in the sequence for some $i\geq 2$. Let $\bar{S}_i=(d_1,\ldots,d_i-1,d_i+1,\ldots,d_k,2,2,\ldots)$ be the subsequence where the $i$-th entry is removed. Then $f(\bar{S})=f(\bar{S}_i)$.
\end{claim}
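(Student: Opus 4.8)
The plan is to write both infinite sequences out coordinate‑wise, observe that exactly one summand of $f(\bar S)$ drops out, and then match the surviving summands of $f(\bar S)$ one‑to‑one with the summands of $f(\bar S_i)$ via a single index shift, checking that paired summands are literally equal because the deleted coordinate equals $1$.

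Concretely, I would set $\bar S=(e_1,e_2,\dots)$ with $e_j=d_j$ for $j\le k$, $e_j=2$ for $j>k$, and $e_i=1$; and $\bar S_i=(f_1,f_2,\dots)$ obtained by deleting coordinate $i$, so $f_j=e_j$ for $j<i$ and $f_j=e_{j+1}$ for $j\ge i$. Recalling that
\[
f(\bar S)=\sum_{j\ge 1}\frac{(e_{j+1}-1)\,H_{e_1+\dots+e_j-j+1}}{e_2\cdots e_{j+1}},
\]
and likewise for $f(\bar S_i)$ with $f$ in place of $e$, the first point is that the $j=i-1$ term of $f(\bar S)$ carries the factor $e_i-1=0$ and therefore vanishes; this is where the hypothesis $i\ge 2$ is used, so that $i-1\ge 1$ is a legitimate index. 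It then suffices to build a bijection between the indices $j\ne i-1$ of $f(\bar S)$ and all indices of $f(\bar S_i)$ under which paired summands coincide. The natural map is $j\mapsto j$ for $j\le i-2$ and $j\mapsto j-1$ for $j\ge i$, which sends $\{1,2,\dots\}\setminus\{i-1\}$ bijectively onto $\{1,2,\dots\}$ (the first range being empty when $i=2$).

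Next I would verify the three resulting identities, each a short computation exploiting $e_i=1$. For $j\le i-2$ nothing to the left of coordinate $i$ is touched, so $(e_1,\dots,e_{j+1})=(f_1,\dots,f_{j+1})$ and the two summands are identical. For the pair consisting of the $j=i$ term of $f(\bar S)$ and the $(i-1)$‑st term of $f(\bar S_i)$: both carry the factor $e_{i+1}-1=f_i-1$; the harmonic index satisfies $e_1+\dots+e_i-i+1=e_1+\dots+e_{i-1}-(i-1)+1=f_1+\dots+f_{i-1}-(i-1)+1$ because $e_i=1$; and the denominator $e_2\cdots e_{i+1}$ loses the factor $e_i=1$, matching $f_2\cdots f_i=e_2\cdots e_{i-1}e_{i+1}$. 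For $j\ge i+1$, paired with the $(j-1)$‑st term of $f(\bar S_i)$: the factor is $e_{j+1}-1=f_j-1$; the partial sum $f_1+\dots+f_{j-1}$ is $e_1+\dots+e_j$ with the coordinate $e_i=1$ deleted, i.e.\ $e_1+\dots+e_j-1$, so $f_1+\dots+f_{j-1}-(j-1)+1=e_1+\dots+e_j-j+1$; and $f_2\cdots f_j$ is $e_2\cdots e_{j+1}$ with the factor $e_i=1$ deleted, hence equal. Summing over the bijection yields $f(\bar S)=f(\bar S_i)$.

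None of these identities is conceptually deep; the one part that requires care — and hence the main obstacle — is the purely clerical index bookkeeping: tracking that coordinates of $\bar S_i$ below position $i$ are unshifted while those from $i$ on are shifted by one, handling the off‑by‑one hidden in the $-j+1$ offset, and not overlooking the boundary case $i=2$. Keeping the three paired‑summand identities stated with the correct subscripts, rather than getting lost in indices, is essentially all the work.
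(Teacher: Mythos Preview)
Your proof is correct and follows exactly the same approach as the paper: the paper's proof also observes that the $j=i-1$ term of $f(\bar S)$ vanishes, that for $j<i-1$ the $j$-th terms of $f(\bar S)$ and $f(\bar S_i)$ coincide, and that for $j>i-1$ the $j$-th term of $f(\bar S)$ equals the $(j-1)$-st term of $f(\bar S_i)$. Your write-up simply makes the index bookkeeping behind these identifications explicit.
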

\begin{proof}
Consider the entries in the sum defining $f(\bar{S})$ and $f(\bar{S}_i)$. The entry $j=i-1$ in $f(\bar{S})$ has value $0$. For $j<i-1$, the $j$-th entries in $f(\bar{S})$ and $f(\bar{S}_i)$ are identical. For $j>i-1$, the $j$-th entry in $f(\bar{S})$ is equal to the $j-1$-th entry in $f(\bar{S}_i)$. 
\end{proof}
By the above claims we can focus on infinite sequences $S=(d_1,\ldots,d_k,2,2,\ldots)$ where $d_i\geq 2$ for $i\geq 2$. Let us prove by induction on $k\geq 2$ that $f(S)\leq \hH_{d_1}$. The claim is true by definition for $k=2$. Next consider any $k>2$ and assume the claim is true for all values up to $k-1$. Define $S'=(d_1+d_2-1,d_3,\ldots,d_k,2,2,\ldots)$. By definition and inductive hypothesis:
\begin{eqnarray*}
f(S) & = &H_{d_1}\frac{d_2-1}{d_2}+\frac{f(S')}{d_2}\leq H_{d_1}\frac{d_2-1}{d_2}+\frac{\hat{H}_{d_1+d_2-1}}{d_2}.
\end{eqnarray*}
By Claim \ref{claim:equality},
\begin{eqnarray*}
H_{d_1}\frac{d_2-1}{d_2}+\frac{\hat{H}_{d_1+d_2-1}}{d_2} & = & H_{d_1}\frac{d_2-1}{d_2}+\frac{1}{d_2}\left(H_{d_1+d_2-1}+\sum_{j\geq d_1+d_2}\frac{1}{j\cdot2^{j-d_1-d_2+1}}\right)\\
& = & H_{d_1}+\sum_{j=d_1+1}^{d_1+d_2-1}\frac{1}{j\cdot d_2}+\sum_{j\geq d_1+d_2}\frac{1}{j\cdot d_2\cdot2^{j-d_1-d_2+1}}\\
 & = & H_{d_1}+\sum_{j\geq d_1+1}\frac{\alpha_j}{j},
\end{eqnarray*}
%We can rewrite the above sum as 
%$H_{d_1}+\sum_{i= d_1+1}^{\infty}\frac{\alpha_i}{i}$,
where 
$$
\alpha_j:=\begin{cases}
\frac{1}{d_2} & \text{for } d_1+1\leq j\leq d_1+d_2-1;\\
\frac{1}{j\cdot 2^{i-d_1-d_2+1}} & \text{for } j\geq d_1+d_2.
\end{cases}
$$
We observe the following simple facts about the coefficients $\alpha_j$.
\begin{claim}\label{claim:coefficients}
One has:
\begin{enumerate}
\item $\sum_{j\geq d_1+1}\alpha_j=1$.
\item For every $i>1$, $\sum_{j\geq d_1+i}\alpha_j\geq \frac{1}{2^{i-1}}$.
\end{enumerate}
\end{claim}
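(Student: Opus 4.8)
The plan is to evaluate the tail sums $\sum_{j\ge d_1+1}\alpha_j$ and $\sum_{j\ge d_1+i}\alpha_j$ in closed form, and then read off both assertions from the elementary inequalities $2^{k}\ge k+1$ and $2^{d_2-1}\ge d_2$ (each a trivial induction). Recall that the term $\alpha_j/j$ appearing in the displayed sum just above Claim \ref{claim:coefficients} equals $\frac{1}{j\cdot d_2}$ for $d_1+1\le j\le d_1+d_2-1$ and $\frac{1}{j\cdot d_2\cdot 2^{\,j-d_1-d_2+1}}$ for $j\ge d_1+d_2$; hence the sequence $(\alpha_j)_{j\ge d_1+1}$ is ``flat'' with value $\tfrac1{d_2}$ on the $d_2-1$ indices $d_1+1,\dots,d_1+d_2-1$, and then decays geometrically, $\alpha_{d_1+d_2+m}=\tfrac1{d_2}2^{-(m+1)}$ for $m\ge 0$.

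For part (1), I would split the sum at $j=d_1+d_2$: the flat block contributes $(d_2-1)\cdot\tfrac1{d_2}$, and the geometric tail contributes $\tfrac1{d_2}\sum_{m\ge 1}2^{-m}=\tfrac1{d_2}$, and these add to $1$.

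For part (2), fix $i\ge 2$ and write $T_i:=\sum_{j\ge d_1+i}\alpha_j$; the goal is $T_i\ge 2^{-(i-1)}$. I would split into two regimes according to whether $d_1+i$ still lies in the flat block. If $2\le i\le d_2$, then $T_i$ consists of the $d_2-i$ remaining flat terms plus the entire geometric tail, so $T_i=\tfrac{d_2-i}{d_2}+\tfrac1{d_2}=1-\tfrac{i-1}{d_2}$; since $d_2\ge i$ in this regime, $T_i\ge 1-\tfrac{i-1}{i}=\tfrac1i\ge 2^{-(i-1)}$, the last step being $2^{i-1}\ge i$. If instead $i>d_2$, then $d_1+i$ is strictly inside the geometric tail and $T_i=\tfrac1{d_2}\sum_{m\ge i-d_2}2^{-(m+1)}=\tfrac1{d_2}2^{-(i-d_2)}=\tfrac{2^{d_2-1}}{d_2}\,2^{-(i-1)}\ge 2^{-(i-1)}$, using $2^{d_2-1}\ge d_2$.

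The argument is essentially bookkeeping; the only place requiring care is aligning the summation indices when separating the flat block from the geometric tail (in particular the boundary case $i=d_2$, where no flat terms survive but the formula $1-\tfrac{i-1}{d_2}$ still holds). I do not anticipate a genuine obstacle.
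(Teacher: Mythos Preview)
Your proposal is correct and follows essentially the same approach as the paper: both split the tail sum into the flat block and the geometric tail, and both use the elementary bound $2^{k-1}\ge k$. The only cosmetic difference is that you place the boundary case $i=d_2$ in the first regime while the paper places it in the second; both choices work.
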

\begin{proof}
%We prove the cases separately;
%\fab{**********}
\begin{enumerate}
     \item $\sum_{j\geq d_1+1}\alpha_j=\frac{d_2-1}{d_2}+\sum_{j=d_1+d_2}^{\infty}\frac{1}{d_2\cdot2^{j-d_1-d_2+1}}=1-\frac{1}{d_2}+\frac{1}{d_2}$.
\item For $i\geq d_2$, one has
$$
\sum_{j\geq d_1+i}\alpha_j=\sum_{j=d_1+i}^{\infty}\frac{1}{d_2\cdot2^{j-d_1+d_2-1}}=\frac{1}{d_2\cdot2^{i-d_2}} \geq \frac{1}{2^{i-1}},
$$     
where in the inequality we used the fact that $k\le2^{k-1}$ for any integer $k\geq 1$.  

For $2\leq i \leq d_2-1$, one has: 
$$
\sum_{j\geq d_1+i}\alpha_j=\frac{d_2-i}{d_2}+\frac{1}{d_2}=\frac{d_2-i+1}{d_2}\geq \frac{1}{i}\geq \frac{1}{2^{i-1}},
$$
where in the first inequality above we used the fact that $\frac{k-j+1}{k}$ is a decreasing function of $k\geq j+1$ and $d_2\geq i+1$, and in the second inequality again the fact that $k\le2^{k-1}$ for $k\geq 1$.
\end{enumerate}
\end{proof}
Intuitively, the term $A=\sum_{j= d_1+1}^{\infty}\frac{\alpha_j}{j}$ is a convex combination of terms of type $1/j$ under the constraint that the sum of the tail coefficients is large enough. An obvious upper bound on $A$ is obtained by choosing coefficients $\beta_j$ that respect the constraints on $\alpha_j$ given by Claim \ref{claim:coefficients}, and at the same time are as large as possible on the smallest terms of the sum. An easy induction shows that the best choice is $\beta_j=\frac{1}{2^{j-d_1}}$ for all $j\geq d_1+1$. Thus we can conclude
$$
f(S)\leq H_{d_1}+\sum_{j\geq d_1+1}\frac{\alpha_j}{j}\leq H_{d_1}+\sum_{j\geq d_1+1}\frac{\beta_j}{j} = H_{d_1}+\sum_{j= d_1+1}^{\infty}\frac{1}{j\cdot 2^{j-d_1}} =\hat{H}_{d_1},
$$
where last equality comes from Claim \ref{claim:equality}.
%$$
%H_{d_1}+\sum_{j= d_1+1}^{\infty}\frac{1}{j\cdot 2^{j-d_1}}=\hat{H}_{d_1}.
%$$
\end{proof}

\begin{proof}[Proof of Claim \ref{claim:maximum}]
Consider $a_1(i)$. Excluding a fixed additive term $\hat{H}_2-p$, the value of this function is $a'_1(i):=\frac{H_{i+2}-x}{i}$, where $x=\hat{H}_2-p\in (0,\hat{H}_2]$. Taking the discrete derivative
$$
a'_1(i+1)-a'_1(i)=\frac{x+\frac{i-1}{i+3}-H_{i+3}}{i(i+1)}
$$
one might observe that this is negative for $i\geq 6$ since $x+\frac{i-1}{i+3}\leq \hat{H}_2+1<2.7726<H_9> 2.8289$. The reader might skip the following cases that are analogous.

Consider now $a_2(i)$. Excluding a fixed additive term $\hat{H}_2$, the value of this function is $a'_2(i):=\frac{H_{i}-x}{i}$, where $x=\hat{H}_2-p\in (0,\hat{H}_2]$. One has 
$$
a'_2(i+1)-a'_2(i)= \frac{x+1-H_{i+1}}{i(i+1)},
$$
which is negative for $i\geq 8$ since $x+1\leq \hat{H}_2+1<2.7726<H_9> 2.8289$.

Consider next $a_3(i)$. Excluding a fixed additive term $\hat{H}_2-p$, the value of this function is $a'_3(i):=\frac{\hat{H}_{i+2}-\hat{H}_2}{i}$. One has
$$
a'_3(i+1)-a_3'(i)=\frac{\hat{H}_2-\hat{H}_{i+2}}{i(i+1)}+\sum_{j\geq 1}\frac{1}{2^j(i+1)(i+j+2)}\leq \frac{\hat{H}_2+1-\hat{H}_{i+2}}{i(i+1)},
$$
which is negative for $i\geq 6$ since $\hat{H}_2+1<2.7726<\hat{H}_8> 2.8194$.

It remains to consider $a_4(i)$. Excluding a fixed additive term $\hat{H}_2$, the value of this function is $a'_4(i):=\frac{H_{i}-\hat{H}_2}{i}$. One has
$$
a'_4(i+1)-a'_4(i)=\frac{\hat{H}_2-\hat{H}_{i}}{i(i+1)}+\sum_{j\geq 1}\frac{1}{2^j(i+1)(i+j)}\leq \frac{\hat{H}_2+1-\hat{H}_{i}}{i(i+1)},
$$
which is negative for $i\geq 8$ since $\hat{H}_2+1<2.7726<\hat{H}_8> 2.8194$.
\end{proof}

\section{Details on the Reduction to Steiner Tree}
\label{sec:lemma1proof}

\begin{proof}[Proof of Lemma \ref{lem:SteinerReduction}]
$\Leftarrow$ Assume by contradiction that $A$ is not a feasible $\CacAP$ solution. Then there exists a $2$-edge cut $\{e_1,e_2\}$, for two edges $e_1,e_2$ belonging to some cycle $C$ of $G$, which is not covered by any link in $A$. Let $G_L=(V_L,E_L)$ and $G_R=(V_R,E_R)$ be the two (vertex disjoint) connected components identified by this cut. Let also $t_L$ and $t_R$ be any two degree $2$ nodes in $V_L$ and $V_R$, respectively. (Observe that these nodes must exist.) By assumption there exists a (simple) path $P=t_L,\ell_1,\ldots,\ell_q,t_R$ between $t_L$ and $t_R$ in $G_{ST}[T\cup A]$, where all $\ell_i$'s are link nodes. Since $\{e_1,e_2\}$ is not covered, each such link has both endpoints either in $V_L$ or in $V_R$. Furthermore, $\ell_1$ and $\ell_q$ have one endpoint in $V_L$ and $V_R$, resp. Hence there must be two consecutive links $\ell_i$ and $\ell_{i+1}$ where $\ell_i$ has both endpoints in $V_L$ and $\ell_{i+1}$ both endpoints in $V_R$. These links cannot be crossing, therefore contradicting the fact that $\{\ell_i,\ell_{i+1}\}$ is an edge of $G_{ST}$.

$\Rightarrow$ We first observe that, w.l.o.g., we can replace each link $\ell$ with its projections $proj(\ell)$. The feasibility of $A$ is preserved. The same holds for the connected components of $G_{ST}[T\cup A]$ since the links in $proj(\ell)$ induce a path in $G_{ST}$. Thus for simplicity we assume that all links in $A$ have both their endpoints in the same cycle. Let $C_1,\dots,C_k$ be the cycles of $G$. For any cycle $C_i$ of the cactus $G$ let $A_i$ be the set of links in $A$ with both their endpoints in $C_i$. The following lemma shows that $G_{ST}[A_i]$ is connected.
\begin{lemma}\label{lem:ReductionOneCycle}
 Let $G=(V,E)$ be an input cactus of CacAP which consists of exactly one cycle and let $A$ be a feasible solution for $G$.  Then $G_{ST}[A]$ is connected.
\end{lemma}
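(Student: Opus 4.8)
The plan is to assume, for contradiction, that $G_{ST}[A]$ is disconnected, and to extract from this assumption a two-edge cut of the cycle $C:=G$ that no link of $A$ crosses, contradicting feasibility. I will identify every link with the chord of $C$ joining its endpoints; since $G$ is a single cycle, two links cross (equivalently, are adjacent in $G_{ST}$) exactly when, as chords, they share an endpoint or their four endpoints alternate along $C$. It is convenient to restate feasibility: $A$ is feasible iff $C+A$ is $3$-edge-connected, and since $\delta_C(S)$ already contributes exactly two edges to the cut whenever $S$ is a nonempty proper arc (a set of consecutive vertices of $C$), this is equivalent to: \emph{for every nonempty proper arc $S$, some link of $A$ has exactly one endpoint in $S$}.

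The key auxiliary statement I would establish first is a ``propagation'' property: if $\ell_0\in A$ lies outside a connected component $B$ of $G_{ST}[A]$, then all endpoints of links of $B$ lie in a single one of the two open arcs into which the endpoints of $\ell_0$ cut $C$. The point is that every link $\ell\in B$ is non-adjacent to $\ell_0$, hence shares no endpoint with $\ell_0$ and does not alternate with $\ell_0$, so both endpoints of $\ell$ lie in one common open arc of $\ell_0$; and two links of $B$ assigned to the two different (disjoint, open) arcs of $\ell_0$ could be neither adjacent nor joined by a path in $G_{ST}[B]$, contradicting connectivity of $B$. Hence the assigned arc is the same for all links of $B$.

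Next I fix a connected component $B$ and let $U\subseteq V(C)$ be the set of endpoints of links of $B$. Disconnectedness gives a link $\ell'\in A\setminus B$, whose two distinct endpoints cannot lie in $U$ (sharing an endpoint with a link of $B$ would make $\ell'$ adjacent to it), so $\bar U:=V(C)\setminus U$ is nonempty; clearly $U\neq\emptyset$ as well. I decompose $U$ and $\bar U$ into their maximal runs of consecutive vertices along $C$ and pick one maximal $\bar U$-run $J$. Since $J$ is a nonempty proper arc, feasibility yields a link $\ell^*=\{a,b\}\in A$ with exactly one endpoint, say $a$, in $J$. As $a\in\bar U$, the link $\ell^*$ has an endpoint outside $U$, so $\ell^*\notin B$, i.e.\ $\ell^*\in A\setminus B$; hence its other endpoint $b$ also lies in $\bar U$. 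If $\bar U$ consisted of a single run we would already have $b\in J$, contradicting that only $a$ lies in $J$; so $b$ lies in a different maximal $\bar U$-run, and consequently the two maximal $U$-runs $I,I'$ flanking $J$ are distinct and nonempty.

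Finally I would contradict the propagation property. Both $I$ and $I'$ are contained in $U$, the endpoint set of $B$. Starting at $a\in J$ and moving along $C$ toward $I$, one traverses all of $I$ before reaching $b$ (which lies neither in $I$ nor in $J$), so $I$ lies entirely in that open arc of $\ell^*$; symmetrically $I'$ lies entirely in the other open arc of $\ell^*$. Thus endpoints of links of $B$ occur in both open arcs of $\ell^*$, contradicting the propagation property applied to $\ell_0=\ell^*\in A\setminus B$. Hence $G_{ST}[A]$ is connected. The step I expect to require the most care is this last cyclic-order bookkeeping — pinning down which side of $\ell^*$ each of $I$, $I'$ lies on — which is exactly why it pays to work with the explicit decomposition of $U$ and $\bar U$ into maximal runs rather than trying, more naively, to separate $V(B)$ from $V(A\setminus B)$ by a single arc; such a clean separation need not exist when $A\setminus B$ is itself disconnected, even though the cut we ultimately use still does the job.
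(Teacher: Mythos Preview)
Your proof is correct. Both your argument and the paper's proceed by contradiction, partition the vertices of the cycle according to which ``side'' of the disconnection they belong to, and locate a $2$-edge cut that no link of $A$ covers; but the mechanics differ. The paper bipartitions $A$ into two colour classes $L_R,L_B$ with no crossings between them, colours the vertices accordingly, and then runs an \emph{extremal} argument: among all links whose endpoints lie in different maximal monochromatic blocks, take one of minimum length, and show that the cut isolating the off-colour block trapped between its endpoints is uncovered (any covering link would either be a shorter such link or would cross the chosen one). You instead isolate a single component $B$, prove the clean structural ``propagation'' lemma that all endpoints of $B$ lie in one open arc of any link outside $B$, and then use feasibility on a maximal $\bar U$-run $J$ to produce a concrete link $\ell^*\notin B$ whose two sides both meet $U$, contradicting propagation. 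Your route avoids the minimality device and packages the non-crossing geometry into a reusable lemma; the paper's route is slightly more direct once the two-colouring is set up but relies on the extremal choice to control the contradiction. Either way the work is the same order of magnitude, and your cyclic-order bookkeeping for placing $I$ and $I'$ on opposite sides of $\ell^*$ is sound (the key point, which you identify, is that $b\notin I\cup J\cup I'$).
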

\begin{proof}
Assume that $G_{ST}[A]$ is not connected. Then $A$ can be partitioned in $L_R$ and $L_B$, such that for any $l_{R}\in L_R$ and $l_{B}\in L_B$, $l_{R}$ does not cross $l_{B}$. We call the links in $L_R$ red links and the links in $L_{B}$ blue links. We can also partition $V$ in $V_R$ and $V_B$, such that the endpoints of red links belong to $V_R$ and the endpoints of blue links belongs to $V_B$. Therefore we call $V_B$ and $V_R$, blue vertices and red vertices respectively.

Let $V_1,V_2,\dots,V_{2k}$ be the partition of vertices of the cycle $G$ into maximal consecutive blocks of vertices of the same color, so that $V_1\cup V_3 \cup \dots \cup V_{2k-1}=V_R$ and $V_2\cup V_4 \cup \dots \cup V_{2k}=V_B$. 

We say that a link $\ell=\{u,w\}\in A$ is \textit{nice}, if $u$ and $v$ belong to different blocks $V_i$ and $V_j$, $i\neq j$. We say that an edge $e=\{u,v\} \in E$ is \textit{colorful} if $u$ is red and $v$ is blue or vice versa. Note that $G$ has precisely $2k$ colorful edges. If there is no nice link in $A$, then any pair of colorful edges of $G$ is not covered by $A$, which is a contradiction.

Assume that $\ell=\{u,v\}\in A$ is a nice link, such that the distance between $u$ and $v$ in the cycle $G$ is minimum. Assume that $u\in V_1$ and $v \in V_{2x+1}$ (and therefore these are red vertices) and also that the vertices of $V_2$ are in the shortest path from $u$ to $v$. Now let $e_1$ and $e_2$ be the colorful edges such that exactly one of their endpoints is in $V_2$. Now we show that the cut formed by $e_1$ and $e_2$ is not covered by $A$. 

Assume that $\{e_1,e_2\}$ is covered, then there should be a link $\ell_1=(w,z)$ such that $w\in V_2$ and $z\not\in V_2$. Then either this link is a nice link that crosses $\ell$, which is a contradiction since $\ell\in L_R$ and $\ell_1\in L_B$, or $\ell_1$ is a nice link such that the distance of $u$ and $v$ is less than the distance of $w$ and $z$.

\end{proof}
For every pair of cycles $C_i$ and $C_j$ that share a vertex $w$, there is a link $\ell_i\in A_i$ and $\ell_j\in A_j$ which are incident to $v$, thus $\ell_{i}$ and $\ell_{j}$ cross. We can conclude that $G_{ST}[A]$ is connected. Finally, since $A$ is feasible, there exists at least one link $\ell\in A$ incident to each node $t$ of degree $2$ in $G$, which implies that the edge $\{\ell,t\}$ belongs to $E_{ST}$. Thus $G_{ST}[T\cup A]$ is also connected.

\end{proof}

\section{Some Details About the Steiner Tree Approximation Algorithm in \cite{BGRS13}}
\label{sec:backgroundSteiner}

We will briefly discuss the $\ln(4)+\epsilon$ approximation algorithm from \cite{BGRS13} for the Steiner tree problem. For a complete presentation of the Steiner tree algorithm we refer to the original paper \cite{BGRS13}. The algorithm is based on the Directed Component Relaxation (DCR) of the Steiner tree problem.

\begin{align}
\text{min} \quad \sum_{C \in \calC}  c(C)x_{C} & \quad \quad \text{(DCR)} \\
 \text{s.t.} \quad \sum_{C \in \delta^+_{\calC}(U)}  x_{C} & \geq 1  \quad\quad \forall  \emptyset \neq U \subseteq T \setminus \{ r \}\\  
 x_{C} & \geq  0 \quad\quad \forall ~C \in \calC.
\end{align}

%\fabr{I rephrased this and added more details}

Here $\calC$ is a set of directed components, where each directed component $C$ is a minimum-cost Steiner tree (of cost $c(C)$) over a subset of terminals. Furthermore, the leaves of $C$ are precisely its terminals, and $C$ is directed towards a specific terminal: the latter node is the sink of $C$, and the remaining terminals are the sources of $C$. Intuitively, our goal is to buy a minimum-cost subset of directed components so that they induce a directed path from each terminal to the root. In more detail, for any cut $U$ that separates some non-root terminal from the root, let $\delta^+_{\calC}(U)$ be the set of components with some source in $U$ and the sink not in $U$. Then every feasible solution has to buy some component in $\delta^+_{\calC}(U)$. The DCR relaxation follows naturally.

%\fab{***}
%
%DCR is obtained by fixing a single terminal $r \in T$ to be the root of the constructed tree, and considering (full) directed components $\calC$, that are trees having terminals as leaves and Steiner nodes as internal vertices. Each directed component $C \in \calC$ is directed towards the terminal in one of its leaves, called the sink of $C$. \fab{The other terminals in the component are called sources}. 
%
%Intuitively, the relaxation asks for a \fab{minimum-cost}(fractional) collection of components allowing to connect from any non-root terminal to the root $r$ in a directed way. \fab{In more detail, for a cut $U$ that separates the root from some other terminal, we let $\delta$ The cost $c(C)$ of each component $C$ is the sum of its edge costs (considered as undirected edges)}

After restricting DCR to solutions that only use components with at most $k$ terminals we obtain DCR$_k$.
For constant $k$, DCR$_k$ has a polynomial number of variables. Furthermore, the separation problem can be solved in polynomial time via a reduction to minimum cut. Therefore DCR$_k$ can be solved in polynomial time. Moreover, the value of DCR$_k$ is known to be a $(1+\epsilon)$-approximation of the value of DCR for large enough $k=O_\eps(1)$.

The iterative randomised rounding algorithm from \cite{BGRS13}, until all terminals are connected to the root, in iterations $t=1,2,3 \ldots$, does the following:
\begin{itemize}
 \item solve DCR$_k$ for the current instance of the Steiner tree problem to get $x^t$;
 \item sample a component $C^t$ from $\calC_k$ with probability proportional to $x_C^t$;
 \item contract the sampled component $C^t$.
\end{itemize}

For the ease of the analysis, by adding dummy components w.l.o.g, one may assume that the total number of components in the fractional solution remains constant across the iterations of the algorithm, i.e., $\sum_{C \in \calC}x^t_{C} = M$ for a proper $M$ for all $t=1,2,\ldots$.
It is argued that after $t$ iterations of the algorithm, having bought the first $t$ sampled components, the residual instance of the problem is expected to be less costly. To this end a reference solution $S^t$ is constructed such that $S^t \cup \bigcup_{t'=1}^{t-1} C^t$ connects all the terminals. The initial reference solution $S^1=OPT_{ST}$ is an optimal solution to the Steiner tree instance of cost $opt$. Consecutive reference solutions $S^{2}, S^{3}, \ldots $ are obtained by gradually deleting edges that are no longer necessary due to the connectivity provided by the already sampled components. 

Key to estimate the expected cost of the final solution is to bound the number of iterations until a particular edge $e \in S^1$ can be removed. Define $D(e) = \max\{t | e \in S^t\}$. In \cite{BGRS13} (proof of Theorem 21) it is shown that there exist a randomised process of constructing reference solutions $S^1, S^2, \ldots$ such that $E[D(e)] \leq \ln(4) \cdot M$, which allows one to bound the total expected cost of sampled components as $E \left[ \sum_{t\geq1}^{} c(C^t) \right] \leq (ln(4)+\epsilon) \cdot opt$. Note that the above \emph{per-edge} guaranty allows for easily handling arbitrary costs of individual edges. In our application to (unweighted) $\CacAP$, we need to average over multiple edges to achieve a good enough bound.

\subsection{Witness Tree and Witness Sets}

We next slightly abuse notation and sometimes denote in the same way a tree and its set of edges. The construction of reference solutions $S^1, S^2, \ldots$ is not trivial. It involves:
\begin{itemize}\itemsep0pt
 \item construction of a terminal spanning tree $W$, called the \emph{witness tree}, based on randomised marking (selection) of a subset of edges of $S^1$. Each edge $e$ of $S^1$ is associated with a proper subset $W(e)\subseteq W$ (witness set of $e$);
 \item randomised deletion of a proper subset of $W$ in response to selecting a particular component $C^t$ in iteration $t$;
 \item removing an edge $e$ from $S^t$ when all edges $W(e)$ have already been deleted. 
\end{itemize}

In the following we discuss the main idea behind and the key properties of each of the three above mentioned processes. We also pin-point the element of the analysis that can be modified in order to utilise the specific properties of the instance we obtain from the reduction from $\CacAP$.

\paragraph{Construction of the witness tree.}
The high level idea behind the witness tree is that we need to always satisfy the condition that $S^t \cup \bigcup_{t'=1}^{t-1} C^t$ connects all the terminals, which is that the remaining fragments of the initial reference solution $S^1$ together with the already sampled components must provide sufficient connectivity. To this end a simpler object providing connectivity is constructed. It is an auxiliary tree $W$ whose node set is the terminals of the instance (edges of  $W$ are independent of the edges of the input graph). It will be easier to delete edges from $W$ in response to sampling components rather than deleting them directly form $S^t$.

We will now discuss methods to construct $W$. Intuitively, removing edges from a Steiner tree (in response to receiving connectivity from a component) is directly possible for only a subset of edges of the Steiner tree.
In particular it appears more difficult to remove a Steiner vertex (and hence a path connecting a Steiner vertex to a terminal). This is related to the concept of \emph{Loss} and \emph{Loss contracting algorithms} (see, e.g., \cite{robins2005tighter}), where one accepts that the cost of the system of paths connecting Steiner nodes to terminals is not removable.

Consider the following procedure: For each component\footnote{Recall that a full component is a maximal subtree whose terminals are exactly its leaves.} $S'$ of the Steiner tree $S^1$ select a single Steiner vertex $s$ and draw the component as a tree rooted in $s$. For every Steiner vertex of $S'$ select and mark a single edge going down (away from $s$). Note that for each Steiner vertex $v$ the marked edges will form a unique path towards a leaf containing terminal $t(v)$. Note also that connected components formed by the marked edges will all have a single terminal node. Construct $W(S')$ by adding to $E(W(S'))$ an edge $\{ t(u), t(v) \}$ for each unmarked edge $\{u, v \}$ of $S'$.\footnote{Note that in \cite{BGRS13} the role of marked and unmarked edges was reversed. It was irrelevant for the analysis in \cite{BGRS13} as it was assumed that the tree $S'$ is binary. In this paper however we will exploit the high degree of Steiner nodes in $S'$ and hence prefer to mark the "Loss" edges.} Observe that the above constructed graph $W(S')$ is a tree spanning the terminals of $S'$. By repeating this procedure for all full components of $S^1$ we obtain tree $W$ spanning all terminals of the Steiner tree instance.

So far we did not specify how to select the edge below Steiner node $v \in S'$ to be marked. In \cite{BGRS13} the tree was assumed to be binary, and the edge would be selected at random by tossing a fair coin. In the current paper we use a different marking strategy as discussed in Section \ref{sec:marking}. 
%propose to analyse the algorithm differently. We mark the edge below $v$ as follows: if $v$ has a terminal child, select and mark a random edge towards a terminal, otherwise select and mark a random edge towards a Steiner child. As we will see later, this will help to improve the analysis for the specific Steiner tree instances we obtain from the reduction form cactus augmentation.

\paragraph{Marking edges of the witness tree.}

When edges of the witness tree $W$ become unnecessary, we mark them. We keep the invariant that the unmarked edges of $W$ together with the already collected components are sufficient to connect all terminals. Still, given a fixed collection of the already sampled components, the choice of which edges of $W$ to mark is not obvious. In \cite{BGRS13} a randomised marking scheme was considered. It was shown (Lemma 19 in \cite{BGRS13}) that there exists a random process marking edges in $W$ in response to sampled components, such that for every edge $e \in W$ not marked until iteration $t$ the probability that it is marked in iteration $t$ is at least $1/M$. In the current work we continue using the mentioned ``uniform'' witness tree marking process, and utilise the following lemma.

\begin{lemma}[lemma 20 in \cite{BGRS13}]
 Let $\tilde{W} \subseteq W$. Then the expected number of iterations until all edges in $\tilde{W}$ are marked is at most $H_{|\tilde{W}|} \cdot M$.
\end{lemma}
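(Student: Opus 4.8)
The plan is to prove this by a potential‑function (supermartingale) argument, which is the natural generalization of the standard analysis of the coupon‑collector process. Write $k := |\tilde{W}|$, and for an iteration $t = 1, 2, \ldots$ let $Z_t \subseteq \tilde{W}$ be the (random) set of edges of $\tilde{W}$ that are still unmarked at the beginning of iteration $t$, so that $Z_1 = \tilde{W}$; let $\mathcal{F}_t$ denote the history of the process before iteration $t$, and note that $\{Z_t \neq \emptyset\}$ is $\mathcal{F}_t$‑measurable. The number of iterations until all edges of $\tilde{W}$ are marked is exactly $N := \sum_{t \geq 1} \mathbf{1}[Z_t \neq \emptyset]$, so it suffices to show $\sum_{t \geq 1} \Pr[Z_t \neq \emptyset] \leq H_k \cdot M$.

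First I would introduce the potential $\Phi_t := M \cdot H_{|Z_t|}$ (with $H_0 = 0$) and establish the one‑step drift bound $E[\Phi_t - \Phi_{t+1} \mid \mathcal{F}_t] \geq \mathbf{1}[Z_t \neq \emptyset]$ almost surely. Fix $t$ with $|Z_t| = j \geq 1$, and let $X$ be the number of edges of $Z_t$ marked during iteration $t$, so $|Z_{t+1}| = j - X$ with $0 \leq X \leq j$. Pointwise, $H_j - H_{j-X} = \sum_{i = j-X+1}^{j} \tfrac{1}{i}$ is a sum of $X$ terms each at least $\tfrac{1}{j}$, hence $H_j - H_{j-X} \geq X/j$. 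Taking conditional expectations and combining linearity of expectation with the cited fact (Lemma 19 of \cite{BGRS13}) that each still‑unmarked edge is marked in iteration $t$ with conditional probability at least $1/M$, we get $E[X \mid \mathcal{F}_t] \geq j/M$, and therefore $E[\Phi_t - \Phi_{t+1} \mid \mathcal{F}_t] = M \cdot E[H_j - H_{j-X} \mid \mathcal{F}_t] \geq \tfrac{M}{j} \cdot \tfrac{j}{M} = 1$. When $Z_t = \emptyset$ the process is already absorbed and $\Phi_{t+1} = \Phi_t = 0$, so the inequality holds there trivially.

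To finish, I would take expectations in the drift bound, obtaining $E[\Phi_t] - E[\Phi_{t+1}] \geq \Pr[Z_t \neq \emptyset]$, and telescope over $t = 1, \ldots, T$: since $\Phi_1 = M H_k$ deterministically and $\Phi_{T+1} \geq 0$, this gives $\sum_{t=1}^{T} \Pr[Z_t \neq \emptyset] \leq M H_k$. Letting $T \to \infty$ and applying monotone convergence yields $E[N] = \sum_{t \geq 1} \Pr[Z_t \neq \emptyset] \leq H_k \cdot M$, which is the claim.

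The main obstacle to watch for is that the events ``edge $e$ is marked in iteration $t$'' are, across the different $e \in Z_t$, not independent, so one cannot directly reason about the probability that \emph{at least one} edge of $Z_t$ is marked; the device that makes the argument work is to pair the pointwise inequality $H_j - H_{j-X} \geq X/j$ with a bound on $E[X]$ that needs only linearity of expectation. A secondary point requiring care is the bookkeeping in the telescoping sum so as not to lose an additive $M$; treating the absorbed state $Z_t = \emptyset$ uniformly, as above, avoids this.
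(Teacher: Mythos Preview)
Your argument is correct. The potential $\Phi_t = M\cdot H_{|Z_t|}$ together with the pointwise inequality $H_j - H_{j-X}\ge X/j$ and linearity of expectation (which is all that is needed to turn the per-edge guarantee of Lemma~19 into $E[X\mid\mathcal F_t]\ge j/M$) cleanly yields the unit drift, and the telescoping is handled carefully.

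As for the comparison: the paper does not give its own proof of this lemma; it merely quotes it as Lemma~20 of \cite{BGRS13} and uses it as a black box in the appendix overview of the Steiner tree algorithm. So there is nothing in the present paper to compare your argument against. For what it is worth, your supermartingale/drift proof is the standard way to obtain this coupon-collector-type bound without any independence assumption on the marking events, and it is essentially the argument one finds in \cite{BGRS13}; the point you flag about avoiding any reasoning on ``at least one edge is marked'' is exactly why the potential approach is the right one here.
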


\paragraph{Removing edges of the reference tree $S^t$.}

Which edges of the reference tree can be removed? Clearly it suffices if $S^t$ provides the same terminal connectivity as the unmarked edges of the witness tree $W$. Note that a single edge $e \in W$ corresponds to a single path $p(e)$ in $S^1$. It then suffices to keep the edges of $S^1$ that occur in a path $p(e)$ of at least one unmarked edge $e \in W$.

We introduce the following notation: for an edge $f$ in $S^1$ let $W(f) = \{ e \in W | f \in p(e) \} $, we call $W(f)$ to be the \emph{witness set} of $f$. Therefore, at iteration $t$, the reference solution $S^t$ contains the edges form $S^1$ whose witness sets are not fully marked until iteration $t-1$.

Observe that the expected number of iterations an edge $f$ from the reference solution survives (until being removed) $E[D(f)]$ can be expressed using only the size of its witness set $W(f)$.

\begin{corollary}
 Let $f \in S^1$, then $E[D(f)] \leq H_{|W(f)|} \cdot M$.
\end{corollary}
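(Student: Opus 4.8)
The plan is to derive this directly from the stated Lemma (Lemma~20 of \cite{BGRS13}) together with the removal rule for edges of the reference solution. First I would fix the witness tree $W$, i.e.\ condition on the random marking of the edges of $S^1$ that defines it; with $W$ fixed, the witness set $W(f)=\{e\in W\mid f\in p(e)\}$ becomes a deterministic subset of $W$, and all remaining randomness is that of the component-sampling process (and of the associated witness-tree marking process it drives). All expectations below are over this remaining randomness.

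Next I would invoke the removal rule recalled just above the statement: at iteration $t$ the reference solution $S^t$ keeps precisely those edges of $S^1$ whose witness sets are not fully marked by the end of iteration $t-1$. Let $\tau(f)$ be the (random) iteration in which the last still-unmarked edge of $W(f)$ gets marked. Since the sequence $S^1\supseteq S^2\supseteq\cdots$ is obtained by only ever deleting edges, it follows that $f\in S^t$ for every $t\le\tau(f)$ and $f\notin S^t$ for every $t>\tau(f)$; hence $D(f)=\tau(f)$, and in particular $D(f)\le\tau(f)$, which is all we need.

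Finally I would apply the stated Lemma with $\tilde W:=W(f)$: the expected number of iterations until all edges of $W(f)$ are marked is at most $H_{|W(f)|}\cdot M$, i.e.\ $E[\tau(f)]\le H_{|W(f)|}\cdot M$. Combining with $D(f)\le\tau(f)$ gives $E[D(f)]\le H_{|W(f)|}\cdot M$, as claimed. The only point that requires a little care --- and the closest thing here to an ``obstacle'' --- is the off-by-one bookkeeping, namely checking that ``$W(f)$ not fully marked until iteration $t-1$'' is exactly the condition ``$f\in S^t$'', so that the last iteration in which $f$ survives coincides with (or is dominated by) the first iteration at which $W(f)$ is completely marked; once this is pinned down there is nothing further to do, since the rest is an immediate application of the Lemma.
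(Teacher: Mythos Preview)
Your proposal is correct and is exactly the argument the paper intends: the corollary is stated without proof as an immediate consequence of Lemma~20 in \cite{BGRS13} together with the removal rule that $f$ leaves $S^t$ precisely once every edge of $W(f)$ has been marked. Your care in conditioning on the witness tree $W$ (so that $W(f)$ is deterministic and only the sampling/marking randomness remains) and in checking the off-by-one between ``fully marked by iteration $t$'' and ``$f\notin S^{t+1}$'' is appropriate; beyond that, as you note, it is a direct application of the lemma with $\tilde W=W(f)$.
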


Following the argument from the proof of Theorem 21 in \cite{BGRS13}, we also get

\begin{corollary}
For $k=O_\eps(1)$ large enough, the total cost of components bought by the algorithm is at most%\fabr{replaced $\eps/2$ with $\eps$ since I found it confusing. It should be ok}
 \[
 \frac{1 + \eps}{M} \sum_{f \in S^1}^{} E[D(f)] \cdot c(f) \leq (1+\epsilon) \cdot \sum_{f \in S^1}^{} H_{|W(f)|} \cdot c(f)
 \]
\end{corollary}

Therefore, it suffices to analyse how the marking scheme used in the construction of the witness tree affects distributions of the sizes of the witness sets for the individual edges of $S^1$.  To this end we will exploit two properties of our instances: the high degree of the Steiner vertices in the optimal solutions, and the fact that all edges of $S^1$ have the same cost.

\end{document}